\newtheorem{theorem}{Theorem}[section]
\newtheorem{lemma}[theorem]{Lemma}
\newtheorem{remark}{Remark}
\theoremstyle{plain}
\newtheorem{assumption}{Assumption}
\begin{document}
%
\title{Variable Gain Gradient Descent-based Reinforcement Learning for Robust Optimal Tracking Control of Uncertain Nonlinear System with Input-Constraints}
%
%
%

\author{\au{Amardeep Mishra$^{1\corr
}$}, \au{Satadal Ghosh$^{2}$}}

\address{\add{1}{Student, Department of Aerospace Engineering, IIT Madras, Chennai 600036, India, ae15d405@smail.iitm.ac.in}
\add{2}{Faculty, Department of Aerospace Engineering, IIT Madras, Chennai 600036, India, satadal@iitm.ac.in}
\email{ae15d405@smail.iitm.ac.in}}

\begin{abstract}
In recent times, a variety of Reinforcement Learning (RL) algorithms have been proposed for optimal tracking problem of continuous time nonlinear systems with input constraints. 
Most of these algorithms are based on the notion of uniform ultimate boundedness (UUB) stability, in which normally higher learning rates are avoided in order to restrict oscillations in state error to smaller values. However, this comes at the cost of higher convergence time of critic neural network weights. 
This paper addresses that problem by proposing a novel tuning law containing a variable gain gradient descent for critic neural network that can adjust the learning rate based on Hamilton-Jacobi-Bellman (HJB) approximation error. 
By allowing high learning rate the proposed variable gain gradient descent tuning law could improve the convergence time of critic neural network weights. Simultaneously, it also results in tighter residual set, on which trajectories of augmented system converge to, leading to smaller oscillations in state error. 
A tighter bound for UUB stability of the proposed update mechanism is proved.
Numerical studies are then presented to validate the robust Reinforcement Learning control scheme in controlling a continuous time nonlinear system. 
\end{abstract}
\maketitle

%

\section{Introduction}
%
%
%
%
Optimal Control as a part of Control Theory seeks to minimize the cost function subjected to system dynamics as constraints. 
In nature, oragnisms act on the environment and observe the resulting reward. 
Over time, the actions on the environment are tweaked to improve the likelihood of the reward. 
This process is sometimes referred to as reinforcement learning and it captures the notion of optimality \cite{lewis2009reinforcement}.
Adaptive dynamic programming (ADP) refers to the mathematical formulation and solution of the reinforcement learning problem \cite{lewis2009reinforcement}. 
It is a practical way of implementing optimal control.
The optimal control problem can be broadly classified into two major categories: regulation Problems (wherein states are driven to zero) and Trajectory Tracking Problems (wherein error between actual state and desired state is driven to zero). For a general nonlinear system, optimal control requires the solution of Hamilton Jacobi Bellman (HJB) equation (which is a nonlinear partial differential equation (PDE)) that yields the optimal cost function. The optimal value function is then used to generate optimal control action. The fundamental problem with this approach is that in even simplest of nonlinear cases, the HJB equation is extremely difficult to solve. For linear systems though, HJB equation is transformed into Riccati equation. 

In order to alleviate the challenge of solving HJB equation directly, iterative Approximate Dynamic Programming methods were first proposed in the works of Werbos as a method to solve optimal control problem for discrete time (DT) systems in his seminal work \cite{werbos1974beyond}- \cite{werbos1977advanced}. 
Neural Network (NN) were used to deal with unknown functions. 
Sutton and Barto in 1995 proposed ADP for discrete time systems \cite{barto19951}. 
The usage of two distinct NNs (also known as Actor-Critic structure) to learn the cost function and the control action first appeared in the works of Barto \cite{barto1983neuronlike} where both the NNs were tuned online. 
Werbos came up with a third NN to approximate the system dynamics \cite{werbos1989neural}. 
All of the aforementioned works deal with DT systems and the first few works that appeared for generic nonlinear continuous time system are from  \cite{abu2005nearly}, \cite{lin2011optimality},  \cite{liu2013adaptive}, \cite{vamvoudakis2010online}, \cite{murray2002adaptive} \cite{yang2014online} \cite{zhao2014mec} \cite{modares2014integral} \cite{bhasin2013novel}. 
The works mentioned above deal with continuous time (CT) nonlinear optimal regulation problem where states are driven to zero. 
Vamvoudakis and Lewis (2010) deal with CT nonlinear optimal regulation problem based on an online algorithm, which involved tuning of the critic and the actor weights in a synchronous fashion. 
The apriori knowledge of the CT nonlinear system dynamics is assumed in both Abu-Khalaf and Lewis (2005) and Vamvoudakis and Lewis (2010). Bhasin et al \cite{bhasin2013novel} introduced a novel method of computing control action for regulation problems for CT nonlinear system where partial knowledge of system dynamics exists. Their method demanded the knowledge of control gain matrix. 
The primary advantage of their methodology was simultaneous tuning of the actor and the critic. Nonetheless, a predefined convex set was required in their work for the implementation of the projection algorithm. This was done to force the NN weights to remain in the set. 
Identifier NNs were used in the works of Yang et al. \cite{yang2014reinforcement} to obviate the requirement of knowledge of drift dynamics. This technique could generate the optimal control for nonlinear continuous time systems with unknown structures.

Use of identifiers is not the only method that has been proposed in the literature to deal with unknown systems while implementing ADP. 
Integral Reinforcement Learning (IRL), first proposed by Vrabie et al. \cite{vrabie2009adaptive} is one such implementation of RL wherein the system dynamics knowledge is not required in policy evaluation step, i.e., the step involving the evaluation of cost function. 
However, it too requires the knowledge of control dynamics in policy iteration step, i.e, the step involving generation of control action. 
Synchronous tuning of actor-critic NN, based on a novel IRL algorithm was first proposed by Modares et al. \cite{modares2014integral} in 2014 for continuous time nonlinear systems. 
A robust ADP algorithm was proposed by Jiang and Jiang \cite{jiang2014robust} to derive the robust control for uncertain nonlinear systems. It was achieved by synthesizing the optimal control solution with infinite horizon cost for original uncertain nonlinear system. 
However, like most of the ADP methods introduced above, Jiang's formulation \cite{jiang2014robust} required initial stabilizing controller. 

Most of the aforementioned ADP schemes are for regulation problems. 
ADP formulations for optimal tracking control problem (OTCP) for CT nonlinear systems was initially proposed by Zhang et al. \cite{zhang2011data} in 2011. 
Zhang's method entailed two different controllers viz., the adaptive optimal control (for transient behaviour, i.e to stabilize the tracking error in transience in an optimal manner) and steady state controller (for steady state, i.e, to maintain the tracking error close to zero in steady state). 
However, the major limitation of his method was that it required the control gain matrix to be invertible in order to implement a steady state controller. 
This requirement was relaxed in \cite{heydari2014fixed} in 2014 when they proposed a single network based critic structure to approximate the cost function. 
Thereafter, \cite{modares2014optimal} proposed an algorithm that was used to analyze the constrained-input optimal tracking problem with a discounted value function for CT nonlinear systems. 
It should be mentioned that the knowledge of drift dynamics is not required in  \cite{modares2014integral} and  \cite{modares2014optimal}, however the knowledge of control dynamics is assumed). Most of the schemes discussed above require an initial stabilizing control to initiate the process of policy iteration.

Finding an initial stabilizing controller to begin the policy improvement is often a very difficult task. 
Recently, a way to relax the criteria of initial stabilizing control for ADP based RL methods (policy iteration) was proposed by Dierks and Jagannathan \cite{dierks2010optimal} as a single online approximator based system. 
Similarly, Yang el al.  \cite{yang2015robust} proposed an ADP algorithm for robust optimal tracking control of nonlinear systems in 2015. 
This formulation, did not require an initial stabilizing controller for robust optimal tracking control problem for nonlinear systems.  
In order to approximate the value function, a single critic NN was utilized in their paper. 
Tracking control action was generated by critic NN. 
However, their method requires the knowledge of nominal plant dynamics and does not include the input constraints. 
It is also noted that their method took a lot of time to achieve convergence of critic NN weights and reduction of oscillation magnitude in state error to a small residual set. 
These requirements might not be feasible for a lot of practical cases. 

\par
Inspired by \cite{yang2015robust} and \cite{liu2015reinforcement}, this paper addresses these concerns by proposing a ADP-based robust optimal tracking controller that is driven by a novel variable gain gradient descent tuning law. 
Similar to \cite{yang2015robust} and \cite{liu2015reinforcement}, the critic update law is made up of three terms, the first term is responsible for reducing the HJB error, the second term is responsible for stability, i.e., it comes into effect when the Lyapunov function is growing along the augmented system trajectories and lastly the third term determines the size of the compact UUB set on which the augmented states finally converge to. 
However, unlike \cite{yang2015robust} and \cite{liu2015reinforcement}, the learning rate of gradient descent presented in this paper is a function of HJB error.
This leads to improved tracking performance in terms of faster convergence times of critic neural network weights and smaller oscillation magnitude of state error (error between actual state and desired state).

The salient features of the proposed variable gain gradient descent scheme for RL tracking controller are:\\
(i) To the best of authors' knowledge, this is the first time when variable learning rate is leveraged in gradient descent to tune critic NN weights to solve robust optimal tracking problem for continuous time nonlinear systems with actuator constraints.
The first term in the weight update law responsible for reducing the approximate HJB error is driven by variable learning rate gradient descent where, the variable learning rate is a function of HJB error. 
So, when HJB error is large, the learning rate gets scaled up proportionally which results in speedier reduction in HJB error, however the learning process is dampened, as the HJB error approaches zero.\\
(ii) Further, variable gain gradient descent leads to tighter residual set for critic NN weights thus resulting in approximated optimal controllers that are closer to ideal optimal controller.
This in turn leads to improved tracking performance.

The rest of the paper is organized as follows, Section \ref{sec2} introduces robust optimal tracking controller and its preliminaries. 
This section is followed by Section \ref{sec3} that utilizes the concept of RL to solve optimal trajectory tracking problem for continuous time nonlinear system with actuator constraints. 
It is divided into two subsections (subsection \ref{crit_approx} and \ref{param_update}) that delve into value function approximation using critic NN and existing parameter update law respectively.
It is then followed by Section \ref{sec:var_gain} that presents the primary contribution of this paper i.e., novel weight update law for critic NN.
This section contains subsection \ref{stab} that discusses the stability proof of the update law presented in this paper in detail.
Finally towards the end, the paper is concluded by \ref{result} and \ref{conclusion} that discusses results and conclusions respectively.

\section{Robust Optimal Tracking Controller}\label{sec2}
\subsection{Problem Formulation}\label{prelim}


The uncertain nonlinear dynamics is given by the affine-in-control equation,
\begin{equation}
\dot{x}=f(x)+g(x)u+\Delta f(x)
\label{augm}
\end{equation}
where $f(x)$ and $g(x)$ are known dynamics (drift and control coupling dynamics) and $\Delta f(x)$ is the unknown matching perturbation.

\vspace{0.2cm}
\begin{assumption}\label{uncert}
\textnormal{The drift dynamics i.e., $f(x)$ is Lipschitz continuous in $x\in \Omega \subset \mathbb{R}^n$ and $g(x)$ is bounded such that, $\exists g_M>0 \ni 0<\|g(x)\|<g_M,~ \forall x \in \mathbb{R}^n$. 
It is also assumed that matching condition is satsified by the perturbation i.e., $\Delta f(x)=g(x)d(x)$, where $d(x) \in \mathbb{R}^m$ is an unknown function bounded by a known function $d_M (x) > 0$.}
\end{assumption}

\begin{assumption}\label{des_dyn}
\textnormal{The commanded trajectory, i.e, the $\dot{x}_d(t):~\mathbb{R} \to \mathbb{R}^n$ is bounded and is Lipschitz continuous satisfying $H(0)=0$, $\ni$, $\dot{x}_d=H(x_d)$.}
\end{assumption}
These two assumptions are in line with Assumption 2 of \cite{yang2015robust} and  Assumptions 1, 2 and 3 of \cite{liu2015reinforcement}.

\textbf{Objective of the control}: It is required to derive a robust optimal tracking controller that makes the system trajectories $x$ follow the desired reference trajectory $x_d$ with state error in a sufficiently small neighborhood of the origin in the presence of unknown but bounded $d(x)$.

\subsection{Preliminaries of Robust OTCP}
In \cite{liu2015reinforcement} the feedback controller was derived for constrained input case in the presence of unknown uncertainties for optimal regulation problem.
In this paper optimal tracking problem is considered with actuator constraints and unknown uncertainties.
In order to achieve the desired objective, an augmented system dynamics that consists of dynamics of errors ($\dot{e}$) and desired states ($\dot{x}_d$) is defined first. 
Using (\ref{augm}) and Assumptions (\ref{uncert}) and (\ref{des_dyn}), tracking error dynamics can be written as:
\begin{equation}
\begin{split}
\dot{e}&=\dot{x}-\dot{x}_d \\
\dot{e}&=f(x_d+e)+g(x_d+e)u(t)-H(x_d(t))+\Delta f(x_d+e)
\end{split}
\label{e_dyn}
\end{equation}
Therefore, the dynamics of augmented system, given as $z=[e^T,x_d^T]^T$, can compactly be written as:
\begin{equation}
\dot{z}={F}(z)+{G}(z)u+\Delta F(z)
\label{aug}
\end{equation}
where, $u \in \mathbb{R}^m$, ${F}:\mathbb{R}^{2n} \to \mathbb{R}^{2n}$ and ${G}: \mathbb{R}^{2n} \to \mathbb{R}^{2n\times m}$ are given by:
\begin{equation}
\begin{split}
{F}(z)=\begin{pmatrix}
{f}(e+x_d)-H(x_d) \\
H(x_d)
\end{pmatrix},~~ {G}(z)=
\begin{pmatrix}
{g}(e+x_d) \\
0
\end{pmatrix}
\end{split}
\label{FG}
\end{equation}
 $\Delta F(z) \in \mathbb{R}^{2n}$ and is defined as $\Delta F(z)=G(z)d(z)$ with $d(z) \in \mathbb{R}^m$ and $\|d(z)\| \leq d_M(z)$. 
Following Assumptions \ref{uncert} and \ref{des_dyn} and Eq. \eqref{FG}, $\|F(z)\| \leq L_f\|z\|$ and $\|G(z)\| \leq g_M$.
 In the subsequent analysis, $d_M \triangleq d_M(z)$.

One of the prime advantages of creating an augmented system, is that, the controller does not require invertibility of control gain matrix and a single controller comprising of both steady state controller and transient control can be synthesized \cite{modares2014optimal} \cite{kiumarsi2014reinforcement}.
Nominal augmented dynamics is given by:
\begin{equation}
\dot{z}={F}(z)+{G}(z)u
\label{nad}
\end{equation}

The infinite horizon discounted cost function for (\ref{nad}) is considered as follows \cite{modares2014optimal} :
\begin{equation}
V(z)=\int_t^{\infty} e^{-\gamma(\tau-t)}[ d_M^2+\bar{u}(z,u)]d\tau    
\label{cost}
\end{equation}
where, $\bar{u}=z^TQ_1z+C(u)$ is the utility function comprising of augmented state $z$ and the control action $u$. 
The positive definite diagonal matrix $Q_1 \in \mathbb{R}^{2n\times 2n}$ is defined as,
\begin{equation}
\begin{split}
Q_1=\begin{pmatrix}
Q & 0_{n\times n} \\
0_{n\times n} &0_{n\times n}
\end{pmatrix}
\end{split}
\end{equation}
where, $Q \in \mathbb{R}^{n\times n}$ is a positive definite diagonal matrix with non zero entries. 
\begin{remark}
\textnormal{Following from \cite{yang2015robust} and \cite{liu2015reinforcement}, it can be shown that robust control problem for \eqref{aug} can be transformed into optimal tracking control problem for nominal augmented system \eqref{nad} with discounted cost function \eqref{cost}.}
\end{remark}
In trajectory tracking problems, $x_d$ contained in $z$ might not go to $0$ in steady state and $u$ encapsulates both optimal part and steady state part, hence, infinite horizon cost index comprising of $z,u$ might blow up and become infinite.
Hence, in order to make $V$ finite, discounted cost function of the form (\ref{cost}) is chosen for trajectory tracking problems.
Generally, the function $C(u)$ is quadratic in nature, however, it can be non-quadratic \cite{abu2008neurodynamic}, \cite{modares2013adaptive}, if, control constraints are taken into account, i.e, $|u_i| \leq u_m,~i=1,2...m$. This corresponds to an input-constrained scenario, which is also considered in this paper. 
Thus, $C(u)$ is defined in this paper as follows \cite{abu2005nearly},\cite{abu2008neurodynamic}-\cite{lyashevskiy1996constrained}.

\begin{equation}
\begin{split}
C(u)&=2u_m\int_0^u (\psi^{-1}(\nu/u_m))^TRd\nu \\
&=2u_m\sum_{i=1}^{m}\int_0^{u_i} (\psi^{-1}(\nu_i/u_m))^TR_i d\nu_i
\end{split}
\label{ctr_cu}
\end{equation}
where, $R \in \mathbb{R}^{m\times m}$ is a positive definite matrix, $\psi \in \mathbb{R}^m$ is a function possessing following properties \\
(i) It is odd and monotonically increasing \\
(ii) It is bounded function ($|\psi(.)| \leq 1$) that belongs to $C^p(p \geq 1)$. 
In literature dealing with constrained input, some of the possible candidates for $\psi$ include, $tanh,erf,sigmoid$. 
In this paper $\psi^{-1}(.)=\tanh^{-1}(.)$.
It can be clearly observed that $C(u)$ (as shown in Lemma \ref{cu} in appendix) is positive. 
The discount factor, $0 \leq \gamma$, defines the value of utility in future. 
The first term inside the integral caters to any perturbations or uncertainties that might appear in the plant dynamics. 

Differentiating (\ref{cost}) along the nominal system trajectories the following can be obtained \cite{modares2014integral}:
\begin{equation}
\begin{split}
 V_z(z)({F}(z)+{G}(z)u)-\gamma V(z)+ d_M^2+\bar{u}(z,u)\\
=\mathcal{H}(z,u,V_z(z))=0   
\end{split}
\label{Ha}
\end{equation}
where, $\mathcal{H}(.)$ represents the Hamiltonian and $V_z(z) \triangleq \nabla_z V(z)$.
Let $V^*(z) \in C^1$ be the optimal cost function that satisfies $\mathcal{H}(.)=0$ and is given by:
\begin{equation}
V^*(z)=\min_{u} \int_t^{\infty}e^{-\gamma(\tau-t)}[d_M^2+\bar{u}(z,u)]d\tau
\label{opt_cost}
\end{equation}
Also in the subsequent analysis, $V\triangleq V(z)$, $V^* \triangleq V^*(z)$ and $V^*_z \triangleq V^*_z(z)$.
Thus, $\mathcal{H}(.)=0$ can be re-written in terms of optimal cost as:
\begin{equation}
\begin{split}
 \nabla_z{V^*}({F}(z)+{G}(z)u)-\gamma V^*(z)+d_M^2+\bar{u}(z,u)=0
\end{split}
\label{hjb1} 
 \end{equation}

Differentiating (\ref{hjb1}) with respect to (w.r.t.) $u$, i.e, $\partial \mathcal{H}/ \partial u=0$, closed form of optimal control action $u^*$ is obtained as:
\begin{equation}
u^*=-u_m\tanh{\Big(\frac{1}{2u_m}R^{-1}{G}(z)^T\nabla_z{V^*}\Big)}
\label{opt_u}
\end{equation}

Substituting (\ref{opt_u}) in (\ref{hjb1}) the HJB equation is formulated as:
\begin{equation}
\begin{split}
V^*_z{F}(z)-2u_m^2A^T(z)\tanh(A(z))+d_M^2+z^TQ_1z+\\
2u_m\int_0^{u^*}\tanh^{-1}(\nu/u_m)^TRd\nu-\gamma V^*=0
\end{split}
\label{hjb_int}
\end{equation}
where $V_z^*=\nabla_z{V^*}$, and $A=(1/2u_m)R^{-1}{G}(z)^TV^*_z \in \mathbb{R}^m$.
The $C(u)$ or last but one term in left hand side of (\ref{hjb_int}) can be simplified as:
\begin{equation}
\begin{split}
&2u_m\int_0^{-u_m\tanh{A(z)}}\tanh^{-1}(\nu/u_m)^TRd\nu\\
&=2u_m^2A^T(z)R\tanh{A(z)}+
u_m^2\sum_{i=1}^{m}R_i\ln[1-\tanh^2{A_i(z)}]
\end{split}
\label{hjb_int1}
\end{equation}
Eq. (\ref{hjb_int1}) follows from Lemma \ref{cu1} given in Appendix \ref{L1}.  Now, using (\ref{hjb_int1}), Eq. (\ref{hjb_int}) can further be simplified into:
\begin{equation}
\small
 V^*_z{F}(z)+d_M^2+z^TQ_1z+u_m^2\sum_{i=1}^{m}R_i\ln[1-\tanh^2{A_i(z)}]-\gamma V^*=0   
\label{hjb_sum}
\end{equation}
Eq. (\ref{hjb_sum}) is the HJB equation which is a nonlinear PDE in optimal cost function. Note that $\ln{(.)}$ used in this paper is natural log with base $e$. Now, the optimality of $u^*$ (defined in (\ref{opt_u})) and asymptotic stability of ($e=x-x_d$) would be discussed, which follows the line of logic as Theorem 1 of \cite{modares2014optimal}.

\begin{theorem}
For augmented system defined in (\ref{aug}), and its associated discounted cost function defined in (\ref{cost}) with $V^*$ being the solution of the HJB equation, the controller ($u^*$) described as in (\ref{opt_u}) minimizes the performance index (\ref{cost}) over all control policies constrained to $|u_i| \leq u_m$.
Further, it also ensures asymptotic stability of error dynamics (\ref{e_dyn}) in the limiting sense when $\gamma \to 0$.
\end{theorem}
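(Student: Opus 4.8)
The plan is to split the statement into an optimality claim and a stability claim, following the line of argument of Theorem 1 of \cite{modares2014optimal}. For optimality, I would take an arbitrary admissible control $u$ with $|u_i|\le u_m$, evaluate the cost (\ref{cost}) along the nominal dynamics (\ref{nad}), and call it $J(u)$. The first step is to insert the telescoping identity $\int_t^\infty \frac{d}{d\tau}\big(e^{-\gamma(\tau-t)}V^*(z(\tau))\big)\,d\tau=-V^*(z(t))$, which is valid provided $e^{-\gamma(\tau-t)}V^*\to 0$ as $\tau\to\infty$ (a boundedness requirement on $V^*$). Expanding the derivative as $e^{-\gamma(\tau-t)}\big(\nabla_z V^*(F+Gu)-\gamma V^*\big)$ and adding it to $J(u)$ yields
\begin{equation*}
J(u)=V^*(z(t))+\int_t^\infty e^{-\gamma(\tau-t)}\big(d_M^2+z^TQ_1z+C(u)+\nabla_z V^*(F+Gu)-\gamma V^*\big)\,d\tau .
\end{equation*}
Using the HJB relation (\ref{hjb1}) evaluated at $u^*$ to eliminate the $\nabla_z V^*F$, $d_M^2$, $z^TQ_1z$ and $-\gamma V^*$ terms in favour of $C(u^*)$ and $\nabla_z V^*Gu^*$ collapses the integrand to $\Phi(u):=C(u)-C(u^*)+\nabla_z V^*\,G(z)(u-u^*)$, so that $J(u)=V^*(z(t))+\int_t^\infty e^{-\gamma(\tau-t)}\Phi(u)\,d\tau$.

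The crux of the optimality part is then to show $\Phi(u)\ge 0$ for every admissible $u$, with equality at $u=u^*$. Since $\psi^{-1}=\tanh^{-1}$ is odd and monotonically increasing, the integrand in (\ref{ctr_cu}) is the gradient of a convex function, hence $u\mapsto C(u)+\nabla_z V^*\,G(z)u$ is convex. Its stationarity condition $\partial\big(C(u)+\nabla_z V^*Gu\big)/\partial u=0$ is exactly the equation that produced (\ref{opt_u}), so $u^*$ is its global minimiser and $\Phi(u)\ge 0$ follows at once; moreover $|u_i^*|=u_m|\tanh(\cdot)|<u_m$, so $u^*$ is automatically admissible. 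Therefore $J(u)\ge V^*(z(t))=J(u^*)$, which proves that $u^*$ minimises (\ref{cost}) over the constrained policies.

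For the stability claim I would take $V^*$ itself as a Lyapunov function for the error dynamics (\ref{e_dyn}), equivalently the nominal augmented system (\ref{nad}) closed with $u^*$. Differentiating along trajectories and substituting the HJB relation gives $\dot V^*=\gamma V^*-\big(d_M^2+e^TQe+C(u^*)\big)$, where $z^TQ_1z=e^TQe$ by the block structure of $Q_1$. Passing to the limit $\gamma\to 0$ removes the sign-indefinite term $\gamma V^*$ and leaves $\dot V^*=-\big(d_M^2+e^TQe+C(u^*)\big)\le 0$, which is strictly negative for $e\ne 0$ because $Q$ is positive definite and $C(u^*)\ge 0$ by Lemma \ref{cu}; asymptotic stability of $e\to 0$ then follows from the standard Lyapunov argument.

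I expect the main obstacle to be twofold. First, the inequality $\Phi(u)\ge 0$ must be established from the non-quadratic convex structure of $C(u)$ rather than by quadratic completion of squares; it is the monotonicity of $\tanh^{-1}$ that promotes the stationary point $u^*$ to a global minimiser, and this should be made precise using the integral form already exploited in (\ref{hjb_int1}) and Lemma \ref{cu1}. Second, in the stability part the discount term $\gamma V^*$ is genuinely indefinite for $\gamma>0$, which is precisely why the conclusion is stated only in the limiting sense $\gamma\to 0$; some care is also needed because the residual terms $d_M^2$ and $C(u^*)$ need not vanish at $e=0$, so one must argue that it is the $e^TQe$ term that drives the error to the origin and that the Lyapunov decrease is strict off the equilibrium.
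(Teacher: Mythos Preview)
Your proposal is correct and follows essentially the same route as the paper: the optimality argument is precisely the one the paper defers to \cite{modares2014optimal}, and your stability computation $\dot V^{*}=\gamma V^{*}-\big(d_M^{2}+e^{T}Qe+C(u^{*})\big)$ is equivalent to the paper's identity $\frac{d}{dt}\big(e^{-\gamma t}V^{*}\big)=-e^{-\gamma t}\big(d_M^{2}+\bar u(z,u^{*})\big)$. The only addition in the paper is that, for fixed $\gamma>0$, it also extracts from this identity a UUB bound on $\|e\|$ (their Cases (a)--(b) and Eq.~(\ref{uub_E})), which goes slightly beyond the theorem statement but is worth including.
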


\begin{proof}
In the formulation of robust OTCP, it can be observed from (\ref{cost}) and (\ref{Ha}) that both $V(z)$ and $\mathcal{H}$ contain $d_m$ (upper bound of unknown perturbation $d(z)$) in their expressions, respectively. This is the difference w.r.t. the expressions of $V(z)$ and $\mathcal{H}$ corresponding to OTCP in \cite{modares2014optimal}. It is evident that the presence of the robust term $d_M^2$ in the performance index does not alter the proof for optimality of $u^*$. Therefore, for details of this part of the proof refer to the proof of Theorem 1 in \cite{modares2014optimal}.

However, the presence of the robust term $d_M^2$ in the performance index affect the proof of the stability in the following way.
Differentiating $V^*(z)$ along the augmented system trajectories,
\begin{equation}
\begin{split}
 \nabla_z{V^*}({F}(z)+{G}(z)u+\Delta F(z))-\gamma V^*(z)+d_M^2+\\
 \bar{u}(z,u)=0
\end{split}
\label{hjb2} 
 \end{equation}
Multiplying $e^{-\gamma t}$ to both sides of Eq. (\ref{hjb2}),
\begin{equation}
\begin{split}
\frac{d e^{-\gamma t}V(z(t))}{dt}=-e^{-\gamma t}[d_M^2+\bar{u}(z,u^*)] \leq 0
\end{split}
\label{dm}
\end{equation}
Note that, (\ref{dm}) has an additional term ($d_M^2$) on the right hand side (RHS) of the equation compared to the Eq. (43) of \cite{modares2014optimal}.
Therefore, following the same methodology as \cite{modares2014optimal}, it can be observed that, tracking error is asymptotically stable when $\gamma= 0$.
However, when $\gamma \neq 0$, the considering $V^*$ and $C(u^*)$ to be finite and using the fact that $z^T Q_1z=e^TQe$, the stability can be analyzed in two cases, namely,

Case (a): When $A<0 \Rightarrow \gamma<\frac{d_M^2+C(u^*)}{V^*}$, then in this case, $\dot{V}<0$, for all values of $e$ implying asymptotic stability.
In order to ensure asymptotic stability, use of sufficiently small value of $\gamma$ is suggested.

Case (b): When $A>0\Rightarrow \gamma>\frac{d_M^2+C(u^*)}{V^*}$, then $\dot{V}<0$, only when following inequality is satisfied, i.e.,
\begin{equation}
\|e\| > \sqrt{\frac{\gamma V^*-d_M^2-C(u^*)}{\lambda_{min}(Q)}}
\label{uub_E}
\end{equation}
The RHS of Eq. (\ref{uub_E}) gives the UUB bound for state error $e$, which is valid only when $A>0$.
\end{proof}

\vspace{0.2 cm}
\section{Background of Optimal Tracking Using RL}\label{sec3}
\subsection{Approximation of Value function using Critic NN}\label{crit_approx}

For applying the optimal controller (\ref{opt_u}), $V^*$ must be calculated first. 
This is difficult to achieve because it requires solution to (\ref{hjb1}), which is a nonlinear PDE. In order to by-pass solving the HJB equation directly, an NN will be utilized to approximate the value function. For that, in this paper, the value function is assumed to be smooth. Let there exist ideal weight parameter vector $W$ that can accurately approximate the value function as:
\begin{equation}
V^*(z)=W^T\vartheta(z)+\varepsilon
\label{crit}
\end{equation}
where, $W \in \mathbb{R}^N$ ($N$ being the size of the regressor vector) denotes the ideal weight vector that can closely approximate the value function. And, $\vartheta(z)=[\vartheta_1(z),\vartheta_2(z),...,\vartheta_{N}(z)]^T \in \mathbb{R}^{N}$ represents a set of regressor functions, with following properties such as: $\vartheta_j(z) \in C^1$ and $\vartheta_j(0)=0$ and $\vartheta_j$s are linearly independent of each other.
Substituting (\ref{crit}) in (\ref{opt_u}),
\begin{equation}
\begin{split}
u^*(z)=-u_m\tanh{\Big(\frac{1}{2u_m}R^{-1}{G}(z)^T \nabla{\vartheta}^TW+\varepsilon_{uu}\Big)}
\end{split}
\label{ctr_crit}
\end{equation}
where, $\varepsilon_{u^*}=(1/2u_m)R^{-1}{G}^T(z)\nabla{\varepsilon}\in \mathbb{R}^m$. Next, substituting (\ref{crit}) in (\ref{hjb_sum}), the HJB equation can be written as,
\begin{equation}
\begin{split}
W^T\nabla{\vartheta}{F}(z)-\gamma W^T\vartheta+z^TQ_1z+d_M^2+\\
u_m^2\sum_{i=1}^{m}\ln[1-\tanh^2{(\tau_{1i}+\epsilon_{u^*_i})}]+\nabla\epsilon^T{F}(z)=0
\end{split}
\label{res_err}
\end{equation}
where, $\tau_{1}=(1/2u_m)R^{-1}{G}(z)^T \nabla{\vartheta}^TW=[\tau_{11},...,\tau_{1m}]^T \in \mathbb{R}^m$,  $\varepsilon_{u^*}=[\varepsilon_{u^*_{11}},\varepsilon_{u^*_{12}},...,\varepsilon_{u^*_{1m}}]^T$.
Upon using Mean value theorem \cite{rudin1964principles}, Eq.  (\ref{res_err}) becomes:

\begin{equation}
\begin{split}
W^T\nabla{\vartheta}{F}(z)-\gamma W^T\vartheta+z^TQ_1z+d_M^2+\\
u_m^2\sum_{i=1}^{m}\ln[1-\tanh^2{(\tau_{1i})}]+\epsilon_{HJB}=0
\end{split}
\label{hjb_ideal}
\end{equation}
where, $\epsilon_{HJB}$ represents the HJB approximation error \cite{abu2005nearly},\cite{modares2014integral} having a form similar to the one in \cite{liu2015reinforcement} and is given as,
\begin{equation}
\epsilon_{HJB}=\nabla\epsilon^T{F}(z)+\sum_{i=1}^{m}\frac{2u_m^2}{p_{1i}}\tanh{p_{2i}}(\tanh^2{p_{2i}}-1)\epsilon_{u^*_i}
\end{equation}
where, $p_{1i}\in \mathbb{R}$ and $p_{2i} \in \mathbb{R}$ considered between $1-\tanh^2{A_i(z)}$ and $1-\tanh^2{\tau_i}$. Now, using (\ref{crit}) and mean value theorem, the optimal control can be re-written as:
\begin{equation}
u^*=-u_m\tanh{(\tau_1(z))}+\epsilon_{u}    
\label{ustar1}
\end{equation}
where $\tau_{1}(z)=(1/2u_m)R^{-1}\hat{G}^T \nabla{\vartheta}^TW=[\tau_{11},...,\tau_{1m}]^T \in \mathbb{R}^m$ and 
$\epsilon_{u}=-(1/2)((I_m-diag(\tanh^2{(q)}))R^{-1}\hat{G}^T\nabla{\epsilon})$ 
with $q \in \mathbb{R}^m$ and $q_i \in \mathbb{R}$ considered between $\tau_{1i}+\varepsilon_{uui}$ and $\epsilon_{uui}$ i.e., $i^{th}$ element of $\tau_{1}+\varepsilon_{uu}$ and $\epsilon_{uu}$, respectively such that tangent of $\tanh{(q)}$ is equal to the slope of the line joining $\tanh{(\tau_1+\epsilon_{uu})}$ and $\tanh{\varepsilon_{uu}}$.  
For the detailed proof, refer to Lemma \ref{mean_val_lem}. 
In the subsequent analysis, $\tau_1 \triangleq \tau_1(z)$. 

Since ideal weights that can accurately approximate the value function are unknown, their estimates will be used instead as follows.
\begin{equation}
{V}(z)=\hat{W}^T\vartheta(z) 
\label{est_weight}
\end{equation}
Error in critic weights is given by $\tilde{W}=W-\hat{W}$. Using (\ref{est_weight}) the estimated optimal control action can be described as:
\begin{equation}
 \hat{u}(z)=-u_m\tanh{\Bigg(\frac{1}{2u_m}{G}^T(z)\nabla{\vartheta}^T\hat{W}\Bigg)}
 \label{est_ctrol}
\end{equation}
From (\ref{hjb_sum}) and (\ref{est_weight}) the HJB approximation error is obtained as follows.
\begin{equation}
\begin{split}
\hat{H}(z,\hat{W})& =\hat{W}^T\nabla{\vartheta}{F}(z)-\gamma \hat{W}^T\vartheta+z^TQ_1z+d_M^2+\\
& u_m^2\sum_{i=1}^{m}\ln[1-\tanh^2{(\tau_{2i})}]\triangleq e(z,\hat{W}) 
\end{split}
\label{est_ha}
\end{equation}
where, $e(z,\hat{W})$ is the HJB error (referred to as $\hat{e}$ in subsequent discussion) and $\tau_{2}(z)=(1/2u_m){G}^T(z)\nabla{\vartheta}^T\hat{W}=[\tau_{21}(z),...,\tau_{2m}(z)]^T \in \mathbb{R}^m$. Next, from (\ref{hjb_ideal}) and (\ref{est_ha}) the HJB error can be expressed in terms of $(\tilde{W}$ which is $W-\hat{W})$ and $W$ as \cite{liu2015reinforcement}:
\begin{equation}
\small
e=-\tilde{W}^T\nabla{\vartheta}{F}(z)+\gamma\tilde{W}^T\vartheta+\sum_{i=1}^mu_m^2[\Gamma(\tau_{2i})-\Gamma(\tau_{1i})]-\epsilon_{HJB}
\label{er}
\end{equation}
where, $\Gamma(\tau_{\iota i})=\ln[1-\tanh^2{\iota i}]$, $\iota=1,2$. It is observed that for all $\tau_{\iota i}(z) \in \mathbb{R}$, $\Gamma(\tau_{\iota i})$ can be represented by:
\begin{equation}
\Gamma(\tau_{\iota i})=-2\ln[1+exp(-2\tau_{\iota i}sgn(\tau_{\iota i}))]-2\tau_{\iota i}sgn(\tau_{\iota i})+\ln(4)
\end{equation}
where, $sgn$ is signum function. Also note that:
\begin{equation}
\begin{split}
\sum_{i=1}^m\Gamma(\tau_{\iota i})&=-2\sum_{i=1}^m\ln[1+exp(-2\tau_{\iota i}sgn(\tau_{\iota i}))]-\\
&2\tau^T_{\iota}sgn(\tau_{\iota})+m\ln(4)
\end{split}
\label{sumtau}
\end{equation}
Therefore, using (\ref{er}) and (\ref{sumtau}), $e$ in terms of $\tilde{W}$, is obtained as \cite{liu2015reinforcement}:
\begin{equation}    
\begin{split}
\hat{e}&=2u_m^2[\tau^T_1sgn(\tau_1)-\tau^T_2sgn(\tau_2)]-\tilde{W}\nabla{\vartheta}{F}(z)+u_m^2\Delta\tau\\
& -\epsilon_{HJB} \\
&=u_m[W^T\nabla{\vartheta}{G}(z)sgn(\tau_1(z))-\\
& \hat{W}^T\nabla{\vartheta}{G}(z)sgn(\tau_2(z))]\tilde{W}^T\nabla{\vartheta}{F}(z)+u_m^2\Delta\tau-\epsilon_{HJB} \\
&=-\tilde{W}^T[\nabla{\vartheta}{F}(z)-u_m\nabla{\vartheta}{G}(z)sgn(\tau_2)]+\rho(z)
\end{split}
\label{e_w_til}
\end{equation}
where, 
\begin{equation}
\begin{split}
\Delta\tau&=2\sum_{i=1}^m\ln\Bigg(\frac{1+exp[-2\tau_{1i}(z)sgn(\tau_{1i}(z))]}{1+exp[-2\tau_{2i}(z)sgn(\tau_{2i}(z))]}\Bigg) \\
\rho(z)&=u_m W^T\nabla{\vartheta}{G}(z)[sgn(\tau_1(z)-sgn(\tau_2(z)))]+u_m^2\Delta\tau\\
&-\epsilon_{HJB}
\end{split}
\end{equation}

\subsection{Existing update laws in literature}\label{param_update}
In traditional RL literature for continuous time nonlinear systems, a quadratic cost function of the form, $E=(1/2)\hat{e}^2$ is chosen, and then gradient descend (GD) is used to drive the parameters $\hat{W}$ so as to minimize this cost $E$ and thus to minimize the HJB error. The following tuning law has been proposed in \cite{vamvoudakis2010online},\cite{bhasin2013novel},\cite{yang2014reinforcement},\cite{zhang2011data},\cite{modares2014optimal},\cite{lewis2013reinforcement}.
\begin{equation}
\dot{\hat{W}}=-\frac{\alpha}{(1+\phi^T\phi)^2}\frac{\partial E}{\partial \hat{W}}=-\frac{\alpha\phi}{(1+\phi^T\phi)^2}\hat{e}
\label{gd}
\end{equation}
where, $\phi=\nabla{\vartheta}({F}(z)+{G}(z)\hat{u})$, $\alpha>0$ is the learning rate, and $1+\phi^T\phi$ is the normalization factor. Then in 2015, Yang et al. \cite{liu2015reinforcement} proposed a modified version of (\ref{gd}) for optimal regulation problems wherein they used constant learning rate in their gradient descent formulation. Their update mechanism was given as below.
\begin{equation}
\begin{split}
\dot{\hat{W}}&=-\alpha\bar{\phi}\Bigg(Y(x)+ d_M^2(x)+u_m^2\sum_{i=1}^m\ln[1-\tanh^2{(\tau_{2i}(x))}]\Bigg)\\
& +\frac{\alpha}{2}\Xi(x,\hat{u})\nabla{\vartheta}{G}(x)[I_m-\mathcal{B}(\tau_{2}(x))]{G}^T(x)L_{2x}\\
& +\alpha\Big((K_1\varphi^T-K_2)\hat{W}+u_m\nabla{\vartheta}{G}(x)[\tanh{(\tau_2(x))}-\\
& sgn(\tau_{2}(x))]\frac{\varphi^T}{m_s}\hat{W} \Big)
\end{split}
\label{tuning_law}
\end{equation}
where, $x$ is the actual state of the system (not the augmented state), $\alpha>0$, $\phi=\nabla{\vartheta}({F}(x)+{G}(x)\hat{u})$, $\bar{\phi}=\phi/m_s^2$, $\varphi=\phi/m_s$, $m_s=1+\phi\phi^T$, $Y(x)=\hat{W}^T\nabla{\vartheta}{F}+x^TQ_1x$, $\mathcal{B}=diag\{\tanh^2{(\tau_{2i}(x))}\},~i=1,2...,m$.  

\section{Variable gain-based update law}\label{sec:var_gain}
\subsection{Update law}
It can be observed in  \cite{yang2015robust} and \cite{liu2015reinforcement} that significantly high amount of time is taken by the approximate optimal controller to bring the states \cite{liu2015reinforcement} or the error in states ($x-x_d$) \cite{yang2015robust} to a small residual set around origin. In both the above papers, a smaller learning rate was selected to avoid oscillations. However, small values of learning rate results in longer learning phase.
In order to address this issue, in this paper, a tuning law with variable learning rate gradient descent is proposed and expressed as follows.
\begin{equation}
\begin{split}
\dot{\hat{W}}&=-\alpha(|e(z,\hat{W})|^{k_2}+l)\bar{\phi}e(z,\hat{W})\\
& +\frac{\alpha}{2}\Xi(z,\hat{u})\nabla{\vartheta}{G}(z)[I_m-\mathcal{B}(\tau_{2}(z))]{G}^T(z)L_{2z}\\
& +\alpha(|e(z)|^{k_2}+l)\Big((K_1\varphi^T-K_2)\hat{W}\\
&+u_m\nabla{\vartheta}{G}(z)[\tanh{(\tau_2(z))}-sgn(\tau_{2}(z))]\frac{\varphi^T}{m_s}\hat{W} \Big)
\end{split}
\label{tuning_law1}
\end{equation}
where, $\alpha>0$ is the learning rate, $l$ is a small positive constant, and $e(z,\hat{W})$ is the HJB error as mentioned in (\ref{est_ha}). 
In the subsequent analysis, $g_1\triangleq |\hat{e}|^{k_2}+l$ further, to ease the development of stability proof, $k_2=1$, however it can be set to any positive value.
In (\ref{tuning_law1}), the term $\phi$ is defined as $\phi=\nabla{\vartheta}({F}(z)+{G}(z)\hat{u})-\gamma\vartheta(z)$, $\bar{\phi}=\phi/m_s^2$, $\varphi=\phi/m_s$, $m_s=1+\phi\phi^T$, 
$\mathcal{B}=diag\{\tanh^2{(\tau_{2i}(z))}\},~i=1,2...,m$.
The term $\Xi(z,\hat{u})$ is a piece-wise continuous indicator function defined as in \cite{liu2015reinforcement}.
\begin{equation}
\Xi(z,\hat{u})=  \begin{cases} 
      0, & if~ \Sigma(z(t)) <0 
      \\
      1, & otherwise
   \end{cases}
\end{equation}
where, $\Sigma(z(t))=L^T_{2z}({F}(z)+{G}(z)\hat{u})$ denotes the rate of variation of Lyapunov function along the system trajectories. 
It is to be noted that, $L_2=(1/2)z^Tz$ and hence $L_{2z}=z$.
The constants, $k_2>0$ provide an augmentation to the controller by enabling accelerated learning, when the HJB error ($e(z,\hat{W})$) is large. 
On the other hand, it dampen the learning process when $e(z,\hat{W})$ diminishes to a small quantity.
Proper choice of this constants allows for the use of higher value of learning rate without significant oscillations as will be observed in the simulation results presented in Section \ref{result}. 
Thus, the controller can bring the error within a small residual set around origin much quickly without any significant oscillations.

Note that the form of (\ref{tuning_law1}) is different from (\ref{tuning_law}) that was presented in literature  \cite{liu2015reinforcement} in following ways. 
\begin{itemize}
\item The $\phi$ in (\ref{tuning_law1}) has an additional term $\gamma\vartheta(z)$ and $e(z,\hat{W})$ has $-\gamma \hat{W}\vartheta(z)$. Both these terms arise because of the discounted cost function that was used in (\ref{tuning_law1}) compared to (\ref{tuning_law}).
\item  
The variable gain in first term of (\ref{tuning_law1}) is chosen to be a function of HJB error. 
This has been done in order to accelerate the reduction of HJB error when it is large and dampen the reduction process when the HJB error becomes small. 
The added benefit of variable gain is that it shrinks the size of the residual sets for both error in state and error in parameter as will become clear in the stability proof.
\item The second term in (\ref{tuning_law1}) is dependent on the variation of Lyapunov function along the system trajectories. It is $0$, when the Lypunov function is strictly decreasing along the system trajectories as shown by the piece-wise indicator function $\Xi(z,\hat{u})$. 
However it comes into effect when the Lyapunov function is non-decreasing along the system trajectories. 
It implies that the control action generated at any time step during policy improvement leads to growth in Lyapunov function along the augmented system trajectories. 
The second terms starts pulling the critic weights in the direction where the Lyapunov is no more increasing along the system trajectories. 
In order to fully understand it, let $\Sigma$ denote the variation of Lyapunov function along the sytem trajectories as $\Sigma=L_{2z}({F}(z)-u_m{G}(z)\tanh{\tau_{2}(z)})$
\begin{itemize}
    \item Gradient descent is utilized in \cite{liu2015reinforcement} to drive the weights in direction such that $\Sigma$ can be reduced and eventually made negative.
\end{itemize}
\begin{equation}
\begin{split}
-\alpha\frac{\partial \Sigma}{\partial \hat{W}}&=-\alpha\frac{\partial [L_{2z}({F}(z)-u_m{G}(z)\tanh{\tau_{2}(z)})]}{\partial\hat{W}} \\
&=\alpha\Bigg(\frac{\partial \tau_{2}(z)}{\partial\hat{W}}\Bigg)^T\frac{\partial[u_m L_{2z}{G}(z)\tanh{\tau_{2}(z)}]}{\partial\hat{W}} \\
&=\frac{\alpha}{2}\nabla{\vartheta}{G}(z)[I_m-\mathcal{B}(\tau_{2}(z))]{G}^T(z)L_{2z}
\end{split}
\end{equation}


\item The last term in (\ref{tuning_law1}) provides control over the UUB sets as mentioned in \cite{liu2015reinforcement}. 
Proper choice of gains $K_1$ and $K_2$ can shrink the UUB ball close to the origin.
\end{itemize}

Using (\ref{e_w_til}) and (\ref{tuning_law1}) the dynamics of error in critic weights is then given as,
\begin{equation}
\footnotesize
\begin{split}
\dot{\tilde{W}}&=\alpha g_1\frac{\varphi}{m_s}\Big[-\tilde{W}^T\phi+u_m\tilde{W}^T\nabla{\vartheta}{G}(z)\mathcal{F}(z)+\rho(z)\Big]\\
& -\frac{\alpha}{2}\Xi(z,\hat{u})\nabla{\vartheta}{G}(z)\Big[I_m-\mathcal{B}(z)\Big]{G}^T(z)L_{2z} \\
&+\alpha g_1\Big[\nabla{\vartheta}{G}(z)\mathcal{F}(z)\frac{\varphi^T}{m_s}\hat{W}+(K_2-K_1\varphi^T)\hat{W}\Big]
\end{split}
\label{crit_err}
\end{equation}
where, $\mathcal{F}(z)=sgn(\tau_2(z))-\tanh{(\tau_2(z))}$. 
\vspace{0.05cm}
\subsection{Proof of Stability of Online Tuning Law}\label{stab}
\begin{assumption}\label{weight_as}
\textnormal{Ideal NN weight vector $W$ is considered to be bounded, i.e., $\|W\| \leq  W_M$. There exists positive constants $b_{\epsilon}$ and $b_{\epsilon z}$ that bound the approximation error and its gradient such that $\|\varepsilon(z)\| \leq b_{\epsilon}$ and $\|\nabla{\varepsilon}\| \leq b_{\epsilon z}$. 
This is in line with Assumptions 3b of \cite{vamvoudakis2010online},  Assumption 5 of \cite{liu2015reinforcement} and Assumptions made in Section 4.1 in \cite{modares2014optimal}}.
\end{assumption}

\begin{assumption}\label{regres_as}
\textnormal{Critic regressors are considered to be bounded as well: $\|\vartheta(z)\| \leq b_{\vartheta}$ and $\|\nabla{\vartheta}(z)\| \leq b_{\vartheta z}$. This is in line with Assumption 4 of \cite{yang2015robust}, Assumption 6 of \cite{liu2015reinforcement} and Assumption 4 of \cite{modares2014optimal}}.
\end{assumption}
In this paper, both the assumptions hold true because, there exists a stabilizing term (second term) in the update law (\ref{tuning_law1}) that comes into effect when Lyapunov function starts growing along the system trajectories. 
This term helps in pulling the system out of region where Lyapunov function is growing thus ensuring that the trajectories remain finite within a region $\Omega_1 \subset R^{2n}$.
\begin{assumption}\label{lyapunov_as}
\textnormal{Let $L_2 \in C^1$ be a continuously differentiable and radially unbounded Lyapunov candidate for (\ref{nad}) and satisfies $\dot{L}_2=L_{2z}^T({F}(z)+{G}u^*)<0$. Furthermore, a symmetric and positive definite $\Lambda(z) \in \mathbb{R}^{n\times n}$ can be found, such that, $L_{2z}^T({F}(z)+{G}u^*)=-L^T_{2z}\Lambda(z) L_{2z}$, where $L_{2z}$ is the partial derivative of $L_{2}$ wrt $z$.
In the subsequent analysis, $\Lambda \triangleq \Lambda(z)$.}
\end{assumption}

Following Lipschitz continuity of $({F}(z)+{G}u^*)$ in $z$, this assumption can be shown reasonable. It is also in line with Assumption 4 mentioned in \cite{liu2015reinforcement} and \cite{dierks2010optimal}.

\begin{theorem}
Let the CT nonlinear augmented system be described by (\ref{nad}) with associated HJB as (\ref{hjb_sum}) and approximate optimal control as (\ref{est_ctrol}) and let the Assumptions \ref{uncert}-\ref{lyapunov_as} hold true, then the tuning law (\ref{tuning_law1}) makes $L_{2z}$ and $\tilde{W}$ uniform ultimate boundedness (UUB) stable.
Further, the UUB set could be made arbitrary small by proper selection of gains $K_1, K_2$ and exponent $k_2$ in \eqref{tuning_law1}.
\end{theorem}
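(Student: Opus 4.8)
The plan is to establish UUB through a single composite Lyapunov candidate that simultaneously penalizes the state error and the critic weight error,
\begin{equation}
J = L_2(z) + \frac{1}{2\alpha}\tilde{W}^T\tilde{W},
\end{equation}
where $L_2=\tfrac12 z^Tz$ is the radially unbounded function of Assumption \ref{lyapunov_as} and $\tilde W=W-\hat W$. Since $J$ is positive definite and radially unbounded in $(z,\tilde W)$, it suffices to show $\dot J<0$ outside a compact residual set whose radius is controlled by $K_1$, $K_2$ and $k_2$. Differentiating gives $\dot J=\dot L_2+\tfrac1\alpha\tilde W^T\dot{\tilde W}$, and I would bound the two summands separately before recombining.

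First I would handle $\dot L_2=L_{2z}^T(F(z)+G(z)\hat u)$. Writing $\hat u=u^*+(\hat u-u^*)$ and invoking Assumption \ref{lyapunov_as} yields $\dot L_2=-L_{2z}^T\Lambda L_{2z}+L_{2z}^TG(z)(\hat u-u^*)$. Using (\ref{ustar1}), the Lipschitz property $\|\tanh\tau_1-\tanh\tau_2\|\le\|\tau_1-\tau_2\|$, the identity $\tau_1-\tau_2=\tfrac{1}{2u_m}R^{-1}G^T\nabla\vartheta^T\tilde W$, and the bounds $\|G\|\le g_M$, $\|\nabla\vartheta\|\le b_{\vartheta z}$, $\|\nabla\varepsilon\|\le b_{\epsilon z}$ of Assumptions \ref{weight_as}--\ref{regres_as}, the perturbation term is bounded by a multiple of $\|z\|\,\|\tilde W\|$ plus a multiple of $\|z\|$; a Young's inequality then splits these into $\|z\|^2$, $\|\tilde W\|^2$ and a constant, retaining the strictly negative leading term $-\lambda_{\min}(\Lambda)\|z\|^2$.

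Next I would substitute the error dynamics (\ref{crit_err}) into $\tfrac1\alpha\tilde W^T\dot{\tilde W}$. The two stabilizing contributions are the normalized term $-g_1(\tilde W^T\varphi)^2$ from the first line and, after writing $\hat W=W-\tilde W$ in the $K_2$ term of the third line, the sign-definite term $-g_1\,\tilde W^TK_2\tilde W\le-g_1\lambda_{\min}(K_2)\|\tilde W\|^2$; because this is negative definite in every direction of $\tilde W$, no persistence-of-excitation condition is needed. The remaining pieces, $g_1\tfrac{1}{m_s}(\tilde W^T\varphi)\rho(z)$, the curvature term $g_1u_m\tilde W^T\nabla\vartheta G\,\mathcal F\,\tfrac{\varphi^T}{m_s}\tilde W$, the cross term $-g_1\tilde W^TK_1\varphi^T\hat W$, and the residual $g_1\tilde W^TK_2W$, are bounded using $\|W\|\le W_M$, boundedness of $\rho$, of $\mathcal F=sgn(\tau_2)-\tanh\tau_2$, and of $\varphi/m_s=\phi/m_s^2$ (since $m_s=1+\phi\phi^T\ge1$), then absorbed into the negative quadratics by completing squares. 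The coupling that needs care is the indicator term $-\tfrac12\Xi\,\tilde W^T\nabla\vartheta G[I_m-\mathcal B]G^TL_{2z}$, which I would treat by the case split built into $\Xi$: when $\Xi=0$ we have $\Sigma=\dot L_2<0$ outright, so the state channel already decreases; when $\Xi=1$ this term is present, is bounded by a multiple of $\|z\|\,\|\tilde W\|$ (as $\|I_m-\mathcal B\|\le1$), and is dominated by the negative quadratics after a Young's inequality, while by construction pointing along the steepest-descent direction on $\Sigma$ derived just before (\ref{crit_err}). Collecting everything yields a bound of the form $\dot J\le-a_1\|z\|^2-g_1a_2\|\tilde W\|^2+g_1b_1\|\tilde W\|+b_0$ with $a_1,a_2>0$. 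Since $g_1=|\hat e|^{k_2}+l\ge l>0$ the quadratic in $\tilde W$ never degenerates, and because $g_1$ grows with the HJB error the linear term is out-scaled for large $\|\tilde W\|$; hence $\dot J<0$ whenever $(z,\tilde W)$ leaves a compact set, establishing UUB of $L_{2z}=z$ and $\tilde W$. Inspecting the resulting thresholds shows they shrink as $\lambda_{\min}(K_2)$, the floor $l$, and the exponent $k_2$ are increased (with $K_1$ chosen to keep the $\varphi$-weighted cross term subordinate), which proves the final claim.

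The main obstacle is keeping the bookkeeping consistent in the presence of the state-dependent gain $g_1$: the stabilizing quadratics and several disturbance terms are \emph{both} multiplied by $g_1$, so one must verify that the net $g_1$-weighted coefficient is negative (which the $K_2$ term secures) while the single unscaled indicator term is controlled through the $\Xi$ case analysis. One must also avoid circularity by bounding $\hat e$, and hence $g_1$, via (\ref{e_w_til}) in terms of $\|z\|$ and $\|\tilde W\|$ rather than treating it as an independent quantity.
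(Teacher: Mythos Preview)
Your overall architecture matches the paper: the same composite Lyapunov candidate $L_2+\tfrac{1}{2\alpha}\tilde W^T\tilde W$, the same substitution of the weight-error dynamics (\ref{crit_err}), and the same $\Xi\in\{0,1\}$ case split with the $u^*$-add-and-subtract trick in the $\Xi=1$ branch. For the bare UUB conclusion your outline is adequate.

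The gap is in the second assertion of the theorem, namely that the residual set shrinks with the variable-gain parameters. In your final estimate
\[
\dot J\le -a_1\|z\|^2-g_1a_2\|\tilde W\|^2+g_1b_1\|\tilde W\|+b_0,
\]
the variable gain $g_1=|\hat e|^{k_2}+l$ multiplies \emph{both} the stabilizing quadratic and the destabilizing linear term in $\|\tilde W\|$. Consequently $g_1$ factors out and the $\tilde W$-threshold is $b_1/a_2$, which is independent of $k_2$ and $l$; your sentence ``because $g_1$ grows with the HJB error the linear term is out-scaled'' does not hold, since the linear term grows at exactly the same rate. Moreover $b_1$ contains the residual $\tilde W^TK_2W$, so $b_1/a_2$ does not shrink with $\lambda_{\min}(K_2)$ either. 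In short, your bookkeeping proves UUB but produces the \emph{same} bound one would get with a constant learning rate, so it cannot support the claim that $k_2$ tightens the set.

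The paper recovers this effect by a different organization of the weight channel. It stacks $\mathcal J=[\tilde W^T\varphi,\ \tilde W^T]^T$ and collects the quadratic form through the matrix $M=\begin{pmatrix}1 & -K_1^T/2\\ -K_1/2 & K_2\end{pmatrix}$, obtaining $A\le g_1\bigl(-\lambda_{\min}(M)\|\mathcal J\|^2+b_N\|\mathcal J\|\bigr)$. The crucial step you are missing is then to bound $g_1$ itself \emph{affinely in $\|\mathcal J\|$} via (\ref{e_w_til}), $g_1\le A_1\|\mathcal J\|+b_N$, and substitute, which turns the estimate into a \emph{cubic} in $\|\mathcal J\|$. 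Solving the resulting quadratic (after dividing by $\|\mathcal J\|$) yields the threshold $\tfrac{b_N}{\lambda_{\min}(M)}\Gamma$ with a scaling factor $\Gamma<1$ determined by $\gamma_1=\lambda_{\min}(M)/A_1$; this $\Gamma$ is precisely the quantitative improvement over the constant-gain bound $b_N/\lambda_{\min}(M)$ and is what ties the UUB radius to the variable-gain design. Without that cubic step, the theorem's second sentence is not established.
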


\begin{proof}
Let the Lyapunov candidate be $L=L_2+(1/2\alpha)\tilde{W}^T\tilde{W}$ (Where $L_2$ is a positive definite function of augmented state as defined after (\ref{tuning_law1})). 
Derivative of $L$ w.r.t. time is obtained as follows.
\begin{equation}
\begin{split}
\dot{L}&=L_{2z}({F}(z)+{G}(z)\hat{u})+\dot{\tilde{W}}\alpha^{-1}\tilde{W} \\
&=L_{2z}({F}(z)-u_m{G}(z)\tanh{(\tau_2(z))})+\dot{\tilde{W}}\alpha^{-1}\tilde{W}
\end{split}
\end{equation}
Utilizing error dynamics of weights, i.e (\ref{crit_err}) and using the fact that $\dot{z}=F(z)+G(z)\hat{u}$, the last term of  Lyapunov derivative becomes:
\begin{equation}
\footnotesize
\begin{split}
&\dot{\tilde{W}}\alpha^{-1}\tilde{W}=\Big[-\tilde{W}^T\phi+u_m\tilde{W}^T\nabla{\vartheta}{G}(z)\mathcal{F}(z)+\rho(z)\Big]g_1\frac{\varphi^T}{m_s}\tilde{W}\\
& -\frac{1}{2}g_2\Xi(z,\hat{u})L^T_{2z}{G}(z)\Big[I_m-\mathcal{B}(\tau_2(z))\Big]{G}^T(z)\nabla{\vartheta}^T\tilde{W} \\
&+g_1u_m\tilde{W}\nabla{\vartheta}{G}(z)\mathcal{F}(z)\frac{\varphi^T}{m_s}\hat{W}+g_1\tilde{W}^T(K_2\hat{W}-K_1\varphi^T\hat{W}) \\
&=-g_1\tilde{W}\varphi\varphi^T\tilde{W}+g_1\delta(z)\varphi^T\tilde{W}+g_1\tilde{W}^T\beta(z)+g_1\tilde{W}^T(K_2\hat{W}\\
&-K_1\varphi^T\hat{W}) \underbrace{-\frac{1}{2}\Xi(z,\hat{u})L^T_{2z}{G}[I_m-\mathcal{B}(\tau_2(z))]{G}^T(z)\nabla{\vartheta}^T\tilde{W}}_\text{$S$}
\end{split}
\label{ldot4}
\end{equation}
where $\delta(z)\triangleq \rho(z)/m_s$ and $\beta(z)\triangleq u_m\nabla{\vartheta}{G}(z)\mathcal{F}(z)(\varphi^T/m_s)W$.
Let, $A\triangleq-g_1\tilde{W}\varphi\varphi^T\tilde{W}+g_1\delta(z)\varphi^T\tilde{W}+g_1\tilde{W}^T\beta(z)\\
+g_1\tilde{W}^T(K_2\hat{W}-K_1\varphi^T\hat{W})$.
The last term in (\ref{ldot4}) can be expressed as:
\begin{equation}
\begin{split}
 \tilde{W}^T(K_2\hat{W}-K_1\varphi^T\hat{W})&=\tilde{W}^TK_2W-\tilde{W}^TK_2\tilde{W}-\\
 &\tilde{W}^TK_1\varphi^TW+\tilde{W}^TK_1\varphi^T\tilde{W}
\end{split}
\end{equation}
Let, 
\begin{equation}
\mathcal{J}\triangleq[\tilde{W}^T\varphi,\tilde{W}^T]^T
\label{j_def}
\end{equation}
then (\ref{ldot4}) can be re-written as:
\begin{equation}
\begin{split}
\dot{\tilde{W}}^T\tilde{W}/\alpha=A+S\leq g_1(-\lambda_{min}(M)\|\mathcal{J}\|^2+b_N\|\mathcal{J}\|)+S
\end{split}
\label{eq:wtildedot}
\end{equation}

where, $M$ and $N$ are defined as:
\begin{equation}
\begin{split}
M=\begin{pmatrix}
1    &      -\frac{1}{2}K^T_1 \\
-\frac{1}{2}K_1  &  K_2
\end{pmatrix};~N=\begin{pmatrix}
 \delta(z)  \\
 (\beta(z)+K_2W-K_1\varphi^TW)
\end{pmatrix}
\end{split}
\label{eq:MN}
\end{equation}
where, $b_N$ is the upper bound of N which is given by the expression:
\begin{equation}
\|N\| \leq b_N= \max(\|N\|)
\label{bn}
\end{equation}
In (\ref{eq:MN}), if $K_1$ and $K_2$ are chosen such that $K_2$ is symmetric, then $M$ becomes symmetric.
Further, in order to ensure that $\lambda_{min}(M)$ is real and positive, $K_1$ and $K_2$ should be selected such that $M$ is positive definite.
Further, $A$ can be developed by leveraging $g_1$ as a function of $\tilde{W}$.
From (\ref{e_w_til}), $g_1$ as a function of $\tilde{W}$ is, 
$g_1=|\hat{e}(\tilde{W})|^{k_2}+l$ (with $k_2=1$).
\begin{equation}
\begin{split}
g_1&=|-\tilde{W}^T\phi+u_m\tilde{W}^T\nabla{\vartheta}{G}(z)\mathcal{F}(z)+\rho(z)|+l\\
&\leq \|\rho\|+\|\tilde{W}\|\|\phi\|+u_m\|\tilde{W}\|b_{\vartheta z}g_M2\sqrt{m}+l\\
&\leq \|\tilde{W}\|\underbrace{(\|\phi\|+u_mb_{\vartheta z}g_M2\sqrt{m})}_\text{$A_1$}+\underbrace{(\|\rho\|+l)}_\text{$A_2$}
\end{split}
\end{equation}
where, $\|\mathcal{F}\| \leq 2\sqrt{m}$.
It could be noted that $\delta(z)=\rho(z)/m_s$ is one of the component of vector $N$, and by appropriately selecting $K_1$ and $K_2$ in $N$, and selecting a very small offset $l$, it can be ensured that $A_2=\|\rho\|+l \leq b_N$. 
Also, from (\ref{j_def}), $\|\tilde{W}\| \leq \|\mathcal{J}\|$ 
, therefore,
\begin{equation}
\begin{split}
g_1\leq A_1\|\mathcal{J}\|+b_N
\end{split}
\label{eq:g1}
\end{equation}

Therefore, the Lyapunov derivative can be rendered in the following inequality:
\begin{equation}
\begin{split}
\dot{L} &\leq L_{2z}({F}(z)+{G}(z)\hat{u})+\\
&(A_1\|\mathcal{J}\|+b_N)(-\lambda_{min}(M)\|\mathcal{J}\|^2+b_N\|\mathcal{J}\|)+S
\end{split}
\label{ldot}
\end{equation}

Based on the variation of Lyapunov function along the system trajectories, which is captured by the value of the piecewise continuous function, $\Xi(z,\hat{u})$, (\ref{ldot}) can be explained in two cases:\\
\textbf{Case(i)}:
When $\Xi(z,\hat{u})=0\Rightarrow S=0$.\\
By definition, in this case, $L_{2z}^T\dot{z}<0$ (where $\dot{z}={F}(z)+{G}(z)\hat{u}$). 
Therefore, 
\begin{equation}
\begin{split}
\dot{L} & \leq L^T_{2z}\dot{z}+\underbrace{(A_1\|\mathcal{J}\|+b_N)(-\lambda_{min}(M)\|\mathcal{J}\|^2+b_N\|\mathcal{J}\|)}_\text{$A$}
\end{split}
\label{ldot2}
\end{equation}
In order for $\dot{L}$ to be negative definite, $A$ should be negative, now for $\|\mathcal{J}\|\neq 0$, $A<0$ when,
\begin{equation}
\footnotesize
\begin{split}
-&A_1\lambda_{min}(M)\|\mathcal{J}\|^2+\underbrace{(b_NA_1-\lambda_{min}(M)b_N)}_\text{$\triangleq B_1$}\|\mathcal{J}\|+b_N^2<0\\
\Rightarrow&\|\mathcal{J}\|>\frac{B_1}{2A_1\lambda_{min}(M)}+\sqrt{\frac{B_1^2}{4A_1^2\lambda^2_{min}(M)}+\frac{b_N^2}{A_1\lambda_{min}(M)}}\\
\Rightarrow&\|\mathcal{J}\|>\frac{b_N}{\lambda_{min}(M)}\underbrace{\left[\frac{1}{2}\left(1-\gamma_1\right)+\sqrt{\frac{1}{4}\left(1-\gamma_1\right)^2-\gamma_1}\right]}_\text{$\triangleq \Gamma$}
\end{split}
\label{eq:S}
\end{equation}
where, $\gamma_1 \triangleq \frac{\lambda_{min}(M)}{A_1}$, therefore, if, $0 \leq \gamma_1 \leq 3-\sqrt{8}\approx0.17$, then $.478\leq \Gamma \leq 1$.
Also, recall from the definition of $\mathcal{J}$ in (\ref{j_def}), the upper bound of $\|\mathcal{J}\|$ can be obtained as,
\begin{equation}
 \|\mathcal{J}\| \leq \Big(\sqrt{1+\|\varphi\|^2}\Big)\|\tilde{W}\|
 \label{Y-ineq}
\end{equation}
Therefore, from lower and upper bounds of $\mathcal{J}$ in (\ref{eq:S}) and (\ref{Y-ineq}), respectively, the bound over $\|\tilde{W}\|$ becomes,  
\begin{equation}
\|\tilde{W}\| > \frac{\frac{b_N}{\lambda_{min}(M)}\Gamma}{\sqrt{1+\|\varphi\|^2}} 
\label{ld1}
\end{equation}

It could be seen that $\tilde{W}$ is UUB stable with bound given in the RHS of (\ref{ld1}). 
Also, note that if Eq. (\ref{ld1}) holds, then the negative definiteness of $\dot{L}$ is ensured.
 

\textbf{Case(ii)}: If $\Xi(z,\hat{u})=1$ \\
By definition, in this case, the Lyapunov function is non-decreasing along the system trajectories.
The analysis of this case follows similarly as in the previous one, except, the last term in the right hand side (RHS) of (\ref{ldot}) also needs to be considered. For that, (\ref{opt_u}), (\ref{ldot}) and Assumption \ref{lyapunov_as} would be utilized.
\begin{equation}
\footnotesize
\begin{split}
\dot{L} & \leq L^T_{2z}{F}(z)-u_m L^T_{2z}{G}(z)\Big[\tanh{(\tau_2(z))}+\frac{2}{2u_m}[I_m\\
&-\mathcal{B}(\tau_2(z))]{G}^T\nabla{\vartheta}^T\tilde{W}\Big]+(A_1\|\mathcal{J}\|+b_N)(-\lambda_{min}(M)\|\mathcal{J}\|^2+b_N\|\mathcal{J}\|) \\
\end{split}
\end{equation}
Now, adding and subtracting $L^T_{2z}({G}(z)u^*)$ one gets:
\begin{equation}
\footnotesize
\begin{split}
\dot{L} & \leq L^T_{2z}({F}(z)+{G}u^*)-u_m L^T_{2z}{G}(z)\Big[\tanh{(\tau_2(z))}\\
& +\frac{g_2}{2u_m}[I_m-\mathcal{B}(\tau_2(z))]{G}^T\nabla{\vartheta}^T\tilde{W}\Big]+(A_1\|\mathcal{J}\|+b_N)\\
&\times(-\lambda_{min}(M)\|\mathcal{J}\|^2+b_N\|\mathcal{J}\|)-L^T_{2z}{G}(z)(-u_m\tanh{(\tau_1(z))}+\epsilon_{u})
\end{split}
\label{ldot_2case}
\end{equation}
Using the inequality $\|\tanh{(\tau_1(z))}-\tanh{(\tau_2(z))}\| \leq T_m$ (see Lemma \ref{tanhlem} in Appendix \ref{L1}), Assumption \ref{weight_as}, \ref{regres_as} and \ref{lyapunov_as},  Inequality (\ref{ldot_2case}) can be re-written as:
\begin{equation}
\footnotesize
\begin{split}
\dot{L} 
& \leq -L^T_{2z}\Lambda L_{2z}-L^T_{2z}{G}(z)\epsilon_{u}+u_m \|L^T_{2z}\|g_M\|\tanh{(\tau_1(z)}\\
& -\tanh{\tau_2(z)}\|+(A_1\|\mathcal{J}\|+b_N)(-\lambda_{min}(M)\|\mathcal{J}\|^2+b_N\|\mathcal{J}\|) \\
& +\frac{g_2}{2}\|L^T_{2z}\|\|\mathcal{N}_1\nabla^T{\vartheta}\tilde{W}\| \\
& \leq -\lambda_{min}(\Lambda)\|L_{2z}\|^2+\|L_{2z}\|(T_mu_m g_M+\frac{g_2}{2}\|\mathcal{N}_1\nabla^T{\vartheta}\tilde{W}\|)\\
&+(A_1\|\mathcal{J}\|+b_N)(-\lambda_{min}(M)\|\mathcal{J}\|^2+b_N\|\mathcal{J}\|)+\frac{1}{2}\|L^T_{2z}\|g_M^2b_{\epsilon z}\\
\end{split}
\label{first_ineq}
\end{equation}
where, $\mathcal{N}_1\triangleq{G}(z)[\mathcal{B}(\tau_2(z))-I_m]{G}^T(z)$. 
Now, two positive constant numbers $n_1$ and $n_2$ are defined such that $n_1+n_2=1$. 
In the following analysis, $\|\tilde{W}\|^2 \leq \|\mathcal{J}\|^2$ is also utilized. 
Therefore the inequality in (\ref{first_ineq}) can be developed as follows:
\begin{equation}
\footnotesize
\begin{split}
\dot{L}&\leq -n_1\lambda_{min}(\Lambda)\|L^T_{2z}\|^2+\|L_{2z}\|T_mu_m g_M+\frac{\|g_2/2\mathcal{N}_1\nabla^T\vartheta\|^2\|\tilde{W}\|^2}{4n_2\lambda_{min}(\Lambda)}\\
&-n_2\lambda_{min}(\Lambda)\Big(\|L^T_{2z}\|-\frac{\|g_2/2\mathcal{N}_1\nabla^T\vartheta\|\|\tilde{W}\|}{2n_2\lambda_{min}(\Lambda)}\Big)^2+A\\
&\leq -\underbrace{n_1\lambda_{min}(\Lambda)}_\text{$\triangleq a$}\|L^T_{2z}\|^2+\|L_{2z}\|\underbrace{(T_mu_m g_M+\frac{1}{2}g_M^2b_{\epsilon z})}_\text{$\triangleq b$}\\
&+\underbrace{\left(\frac{\|g_2/2\mathcal{N}_1\nabla^T\vartheta\|^2}{4n_2\lambda_{min}(\Lambda)}\right)}_\text{$\triangleq c$}\|\mathcal{J}\|^2+(A_1\|\mathcal{J}\|+b_N)(-\lambda_{min}(M)\|\mathcal{J}\|^2+b_N\|\mathcal{J}\|)\\
&\leq-a\left(L_{2z}-\frac{b}{2a}\right)^2+\frac{b^2}{4a}+\|\mathcal{J}\|\Big(-A_1\lambda_{min}(M)\|\mathcal{J}\|^2\\
&+(A_1b_N-b_N\lambda_{min}(M)+c)\|\mathcal{J}\|+b_N^2\Big)
\end{split} 
\label{ldot_semi_final}
\end{equation}
In order for $\dot{L}$ to be negative definite, 
\begin{equation}
\begin{split}
& -a\left(L_{2z}-\frac{b}{2a}\right)^2+\frac{b^2}{4a}<0 ~\Rightarrow \|L_{2z}\| > \frac{b}{a} 
\end{split}
\label{l2z1}
\end{equation}
and
\begin{equation}
\footnotesize
\begin{split}
& \|\mathcal{J}\|\Big(-A_1\lambda_{min}(M)\|\mathcal{J}\|^2+(A_1b_N-b_N\lambda_{min}(M)+c)\|\mathcal{J}\|\\
&+b_N^2\Big)<0\\
\Rightarrow&\|\mathcal{J}\|>\frac{b_N}{\lambda_{min}(M)}\underbrace{\left[\frac{1}{2}\left(1-\gamma_1+\alpha_2\right)+\sqrt{\left(\frac{1}{2}(1-\gamma_1+\alpha_2)\right)^2+\gamma_1}\right]}_\text{$\triangleq \Gamma^{'}$}
\end{split}
\label{Yineq1}
\end{equation}
where, $\alpha_2=c/(A_1b_N)\geq 0$ and $\Gamma^{'}$ is a fractional scaling factor that can scale the term $\frac{b_N}{\lambda_{min}(M)}$.
Now, in order for $\Gamma^{'}$ to lie between $[c_1,c_2]$ such that $\frac{1}{2}<c_1<c_2<1$, $\gamma_1$ must lie between,
\begin{equation}
\frac{2c_2(1+\alpha_2)-c_2^2}{2c_2-1}\leq \gamma_1\leq \frac{2c_1(1+\alpha_2)-c_1^2}{2c_1-1}
\end{equation}
From (\ref{Y-ineq}) and (\ref{Yineq1}), UUB set for $\tilde{W}$ is,
\begin{equation}
\begin{split}
\|\tilde{W}\|>\frac{\frac{b_N}{\lambda_{min}(M)}}{\sqrt{1+\|\varphi\|^2}}\Gamma^{'}
\end{split}
\label{bound}
\end{equation}
Therefore, for $\dot{L}$ to be negative definite, both \eqref{l2z1} and \eqref{bound} should hold true.
\vspace{-.1cm}

This completes the stability proof of the update mechanism (\ref{tuning_law1}).
\end{proof}


\begin{remark}
\textnormal{Note that for \textbf{Case (i)}, if $\tilde{W}$ satisfies (\ref{ld1}), and for \textbf{Case (ii)}, if $L_{2z}$ and $\tilde{W}$ satisfy (\ref{l2z1}) and (\ref{bound}), respectively, then it leads to decreasing $\tilde{W}$ and $L_{2z}$.
It is evident that when the Lyapunov function is decreasing along the augmented state trajectory, variable learning rate has a direct influence over UUB bound for error in critic NN weights ($\tilde{W}$). 
By suitable selection of $K_1\text{ and }K_2$, the scaling factor $\Gamma$ in (\ref{eq:S}) can be varied between $.478$ and $1$ or $\Gamma^{'}$ in (\ref{Yineq1}) can be varied between $\frac{1}{2}$ and $1$ and accordingly the UUB bound of $\tilde{W}$ gets scaled down compared to that $\left(=\frac{b_N}{\lambda_{min}(M)}\right)$ with constant learning rate (also derived in Eq. (45) of  \cite{liu2015reinforcement}).
The UUB set for $\tilde{W}$ for constant learning rate gradient descent is $\|\tilde{W}\| > \frac{b_N}{\lambda_{min}(M)}$ for both Case (i) and (ii).
 This leads to critic NN weights converging close to their ideal weights in finite time.}
\end{remark}
\begin{remark}
\textnormal{Further, variable learning can scale the learning speed based on the instantaneous value of the HJB approximation error. 
This leads to faster convergence time as compared to constant learning rate gradient descents}
\end{remark}
These advantages are exemplified in the following section.
\vspace{-.1cm}
\section{Results and Simulation}\label{result}
\vspace{-.1cm}
In this section we will consider the numerical simulation of the parameter update law presented in this paper. 
\begin{enumerate}
    \item At first the parameter update law is validated on a generic nonlinear system with two different actuator limits and the results are contrasted against the constant learning parameter update law.
    \item Thereafter, the variable gain update law is validated on a full 6-DoF nonlinear model of UAV and the result is contrasted against the constant learning-based parameter update law.
\end{enumerate}
\vspace{-.2cm}
\subsection{Nonlinear system}
Consider a continuous time nonlinear system $\dot{x}=f(x)+g(x)u$ as mentioned in \cite{yang2015robust},
\begin{equation}
\begin{split}
f&=\begin{pmatrix}
-x_1+x_2 \\
-(x_1+1)x_2-49x_1+.5((\cos(x_1))^3\sin(x_2))
\end{pmatrix}=\begin{pmatrix}
f_1\\
f_2
\end{pmatrix} \\
g&=\begin{pmatrix}
0 \\
1
\end{pmatrix}=\begin{pmatrix}
g_1 \\
g_2
\end{pmatrix}
\end{split}
\label{sim_dyn}
\end{equation}
 Drift dynamics $f_1,f_2$ and control coupling dynamics ($g_1,g_2$) are as mentioned in (\ref{sim_dyn}).

This continuous time nonlinear system is required to track a desired reference system given as \cite{yang2015robust}.
\begin{equation}
\begin{split}
\begin{pmatrix}
\dot{x}_{d1} \\
\dot{x}_{d2}
\end{pmatrix}=
\begin{pmatrix}
x_{d2} \\
-49x_{d1}
\end{pmatrix}
\end{split}
\end{equation}
The augmented state vector $z=[e_{x1},e_{x2},x_{d1},x_{d2}]^T$. The Lyapunov function $L_2$ is selected as $L_2=1/2z^Tz$. Also, $R=1$, and $Q_1$ (refer to Eqs. (\ref{cost}), (\ref{ctr_cu})) is selected as,
\begin{equation}
Q_1=\begin{pmatrix}
I_2 & 0\\
0 & 0_{2\times 2}
\end{pmatrix}
\end{equation}
where, $I_2=diag(10,10)$.  
Regressor vector for critic network is selected as \cite{yang2015robust}. The larger the size of the regressor vector with multiple polynomial powers of augmented state $z$, the accurate will be the results \cite{hornik1990universal}.
\begin{equation}
\vartheta(z)=[z_1^2,z_2^2,z_3^2,z_4^2,z_1z_2,z_1z_3,z_1z_4,z_2z_3,z_2z_4,z_3z_4]^T
\end{equation}
Initial state of the system is chosen to be, $x(0)=[1.5,1.5]^T$. Critic weights are initialized to 0, i.e, $\hat{W}(0)=0$. 
A dithering noise of the form $n(t)=2e^{-0.009t}(\sin(11.9t)^2\cos(19.5t)+\sin(2.2t)^2\cos(5.8t)+\sin(1.2t)^2\cos(9.5t)+\sin(2.4t)^5)$ is added to maintain the persistent excitation (PE) condition \cite{vamvoudakis2014online}.
Now, a comparative study of the variable gain gradient descent method presented in this paper w.r.t. constant gradient descent will be carried out. 
In order to validate the performance of the controller developed in this paper, two input bounds were selected, i.e., $u_m=1.8$ and $u_m=9$. 
Figs. \ref{fig:1.8_with} and \ref{fig:1.8_wout} correspond to the case with input bound of 1.8, while Figs. \ref{fig:9_with} and \ref{fig:9_wout} correspond to the input bound of 9.
Constant learning rate ($\alpha$) for $u_m=9$ is selected to be 35.9 and discount ($\gamma=.1$), similarly for $u_m=1.8$, the constant learning rate $\alpha$ was 92.9 and discount being $\gamma=.1$. 
Constants used in variable gain gradient descent are $k_2=1.4$ for $u_m=9$ and $k_2=.7$ for $u_m=1.8$ (see Eq. (\ref{tuning_law1})). Simulations have been run till the critic NN weights have converged in respective cases.
\begin{figure*}
\centering
\subcaptionbox{Online training of Critic Weights\label{crit9}}{\includegraphics[width=.32\textwidth,height=9.5cm,keepaspectratio,trim={1.8cm 0.0cm 2cm .08cm},clip]{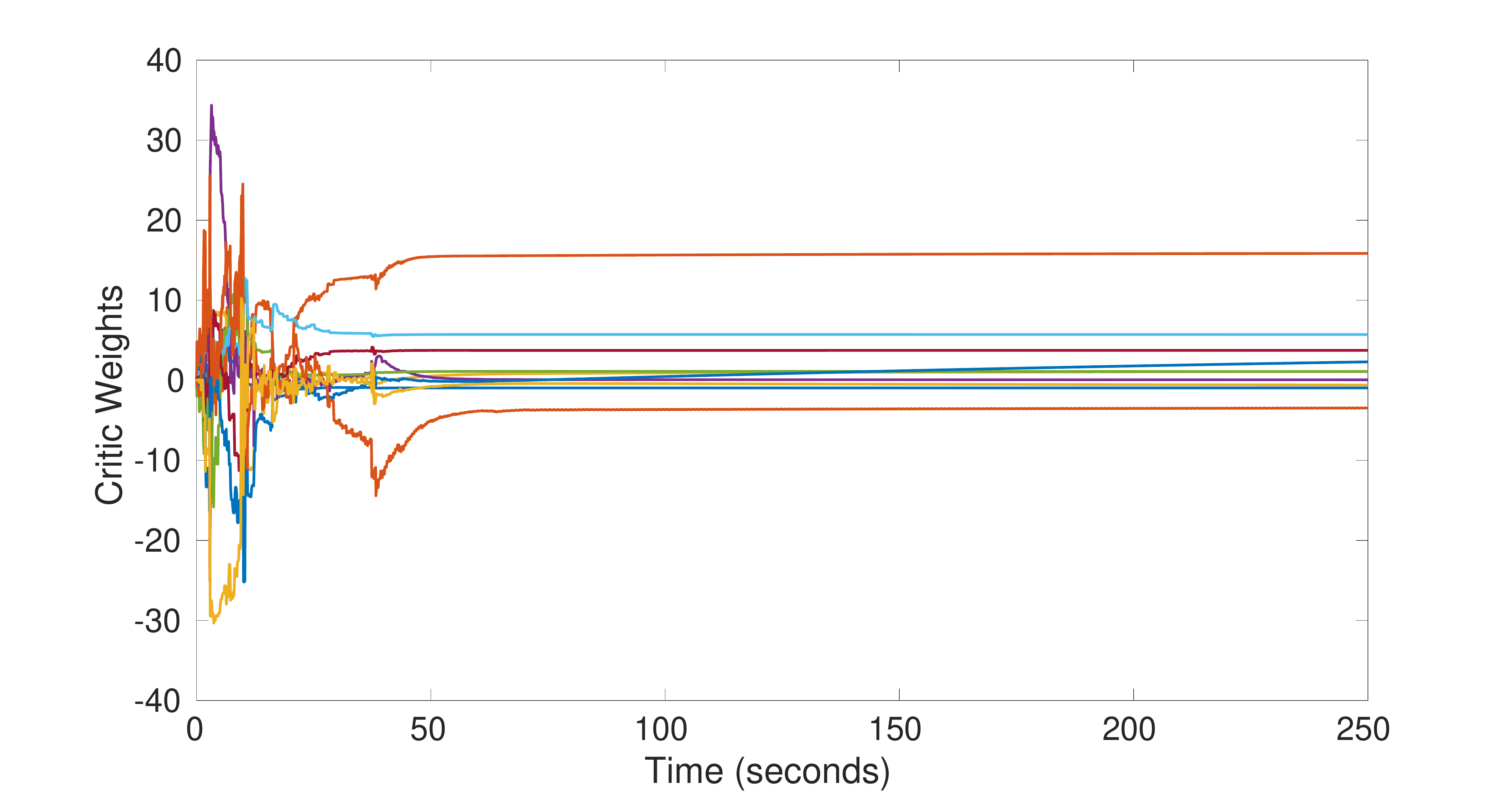}}%
\hspace{0cm} 
\subcaptionbox{State error\label{er9}}{\includegraphics[width=.32\textwidth,height=9.5cm,keepaspectratio,trim={1.8cm 0.0cm 2cm .08cm},clip]{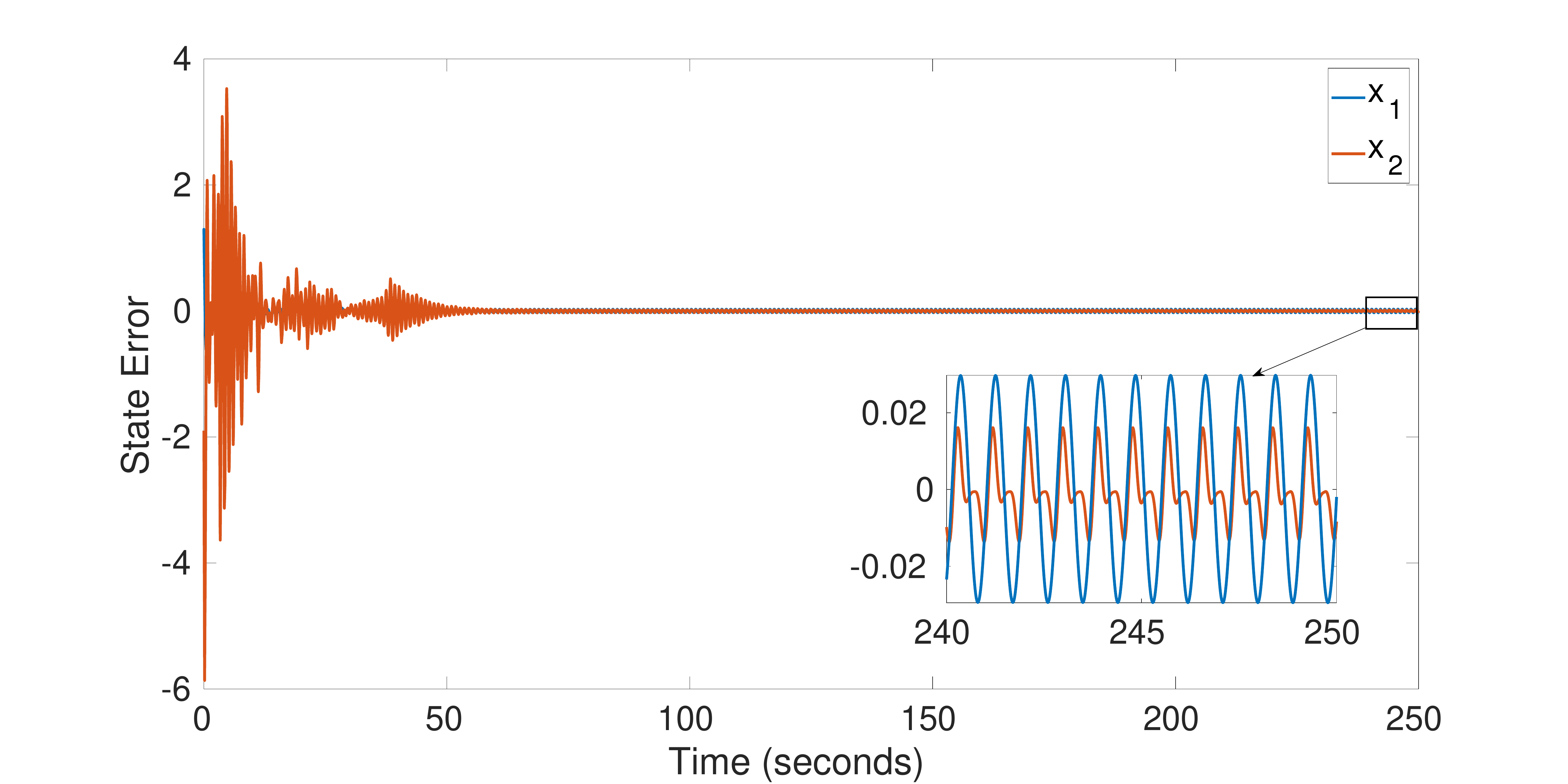}}%
\hspace{0cm} 
\subcaptionbox{Control action\label{ctr9}}{\includegraphics[width=.32\textwidth,height=9.5cm,keepaspectratio,trim={1.8cm 0.0cm 2cm .08cm},clip]{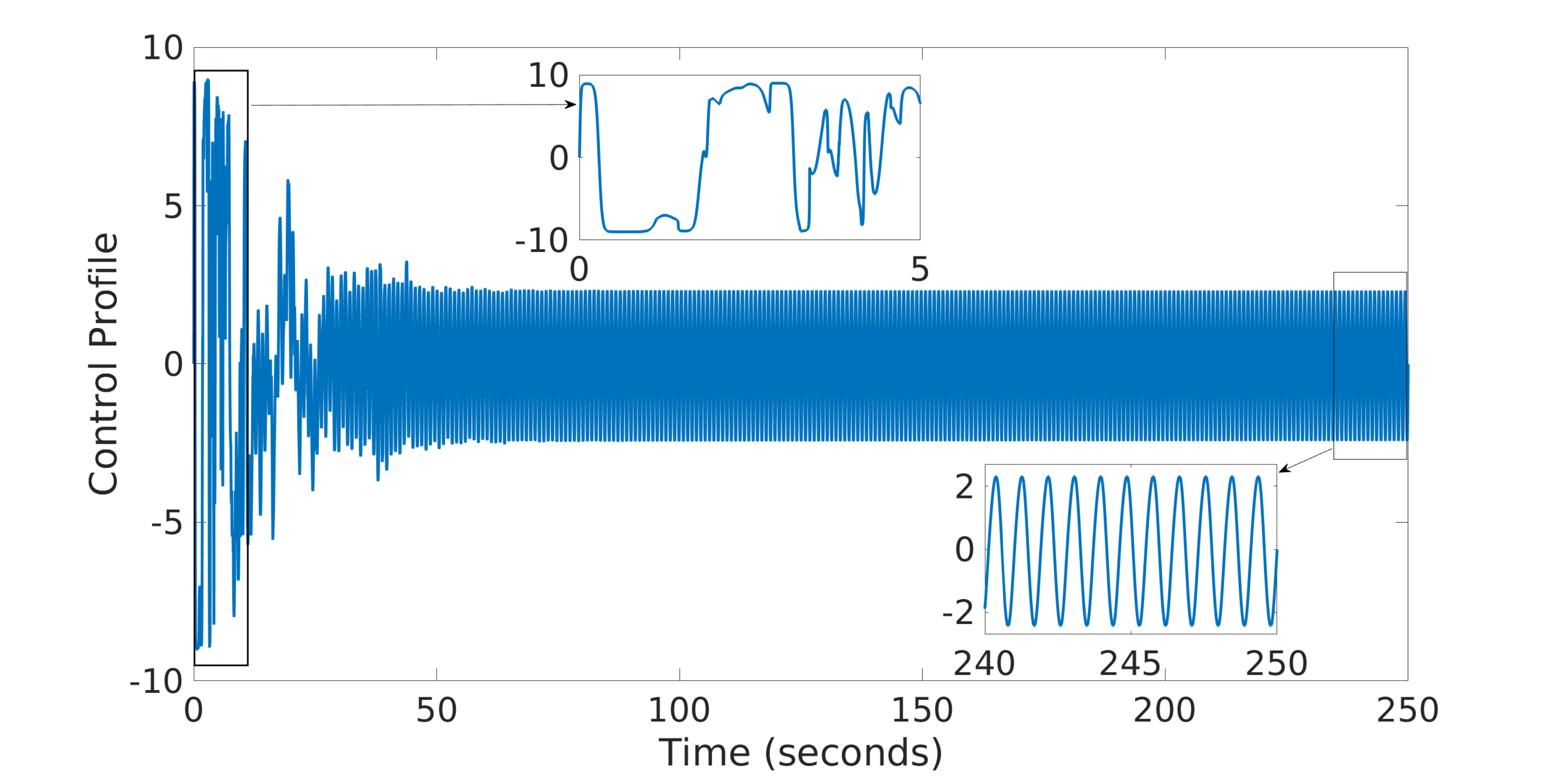}}%
\caption{Critic NN, state error and control profile with variable gain gradient descent (for $u_m=9$)}
\label{fig:9_with}
\end{figure*}
\begin{figure*}
\centering
\subcaptionbox{Online training of critic weights\label{crit9_wout}}{\includegraphics[width=.32\textwidth,height=9.5cm,keepaspectratio,trim={1.8cm 0.0cm 2cm .08cm},clip]{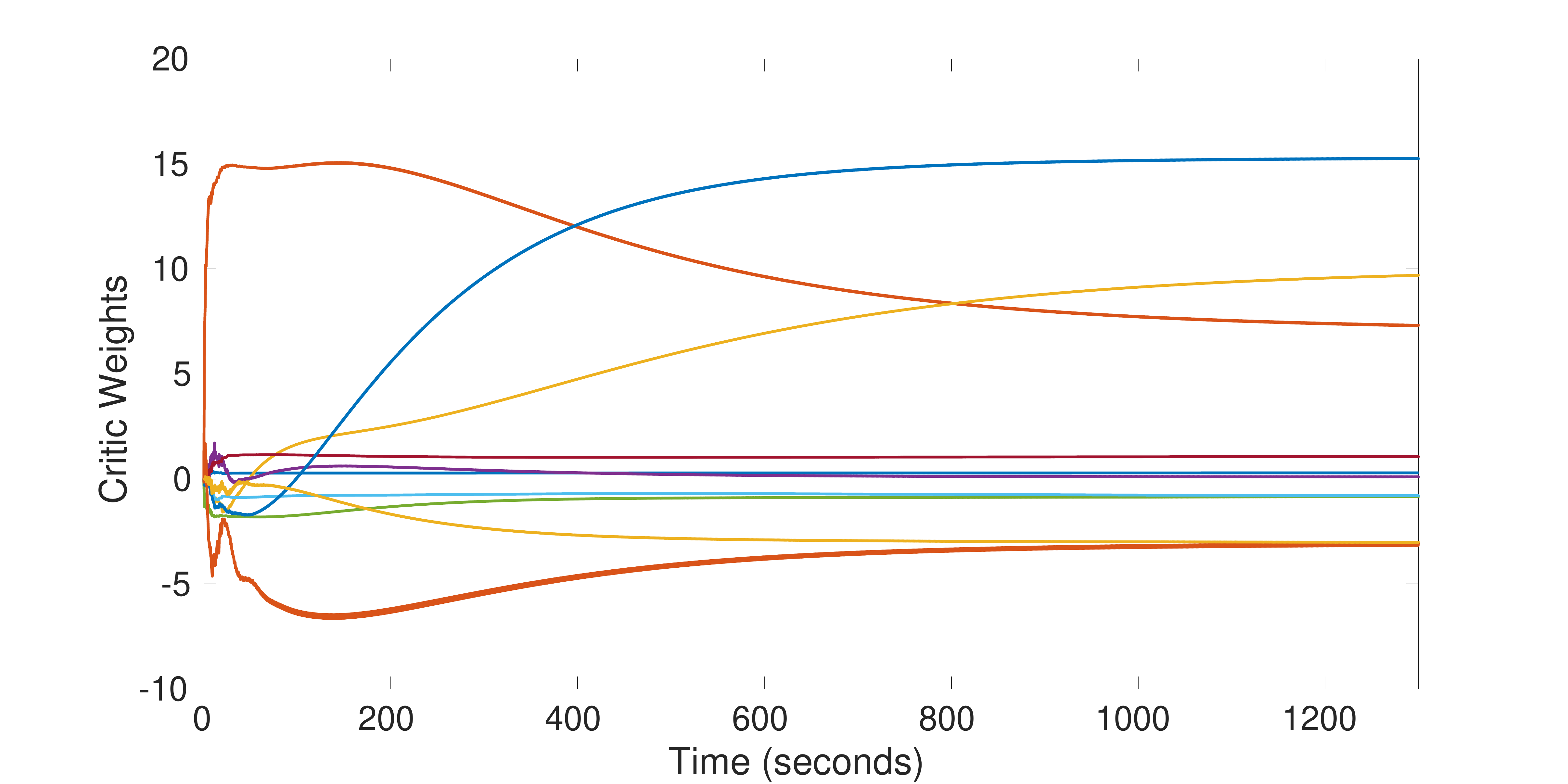}}%
\hspace{0cm} 
\subcaptionbox{State error\label{er9_wout}}{\includegraphics[width=.32\textwidth,height=9.5cm,keepaspectratio,trim={1.8cm 0.0cm 2cm .08cm},clip]{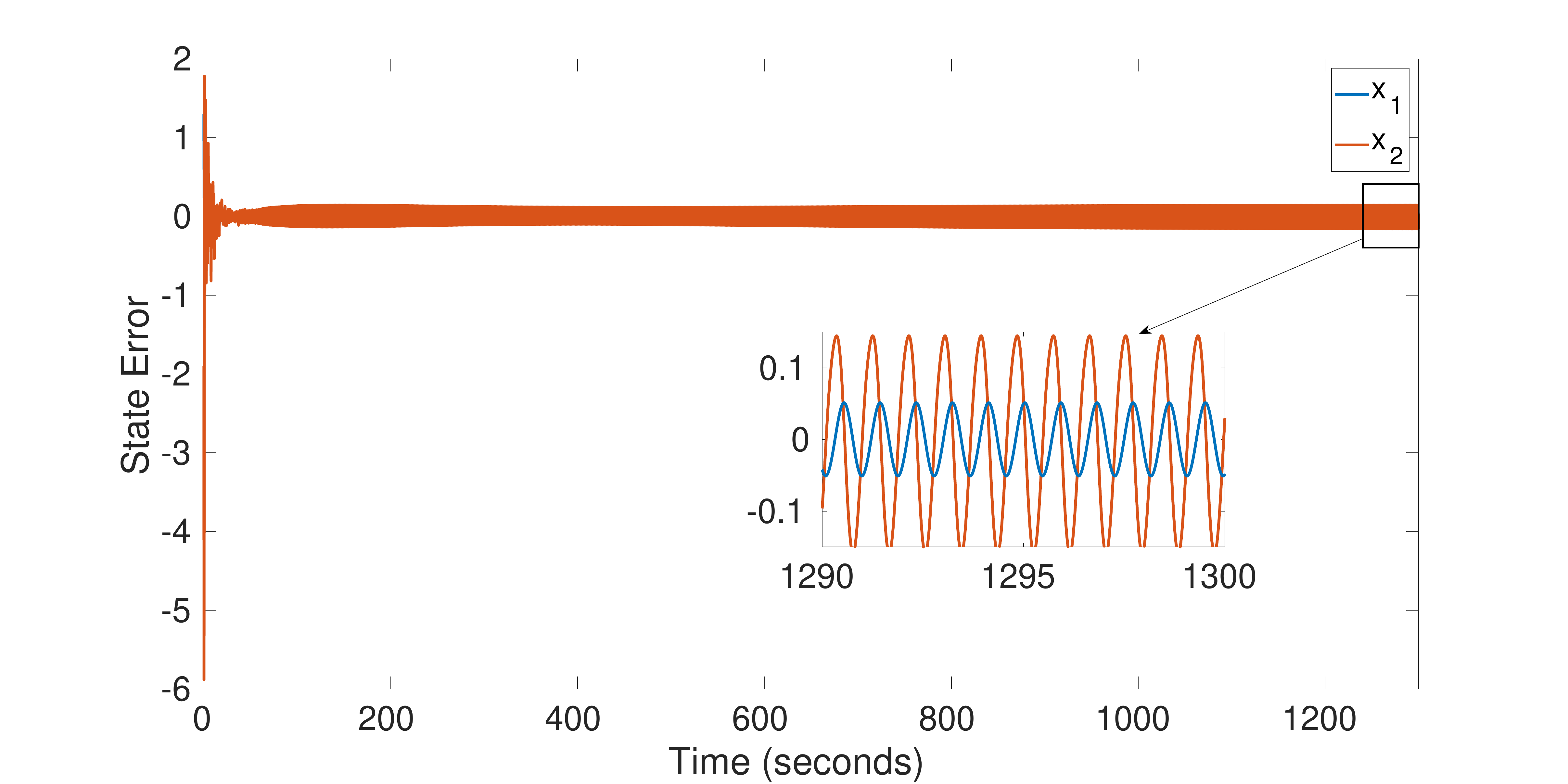}}%
\hspace{0cm} 
\subcaptionbox{Control action\label{ctr9_wout}}{\includegraphics[width=.32\textwidth,height=9.5cm,keepaspectratio,trim={1.8cm 0.0cm 2cm .08cm},clip]{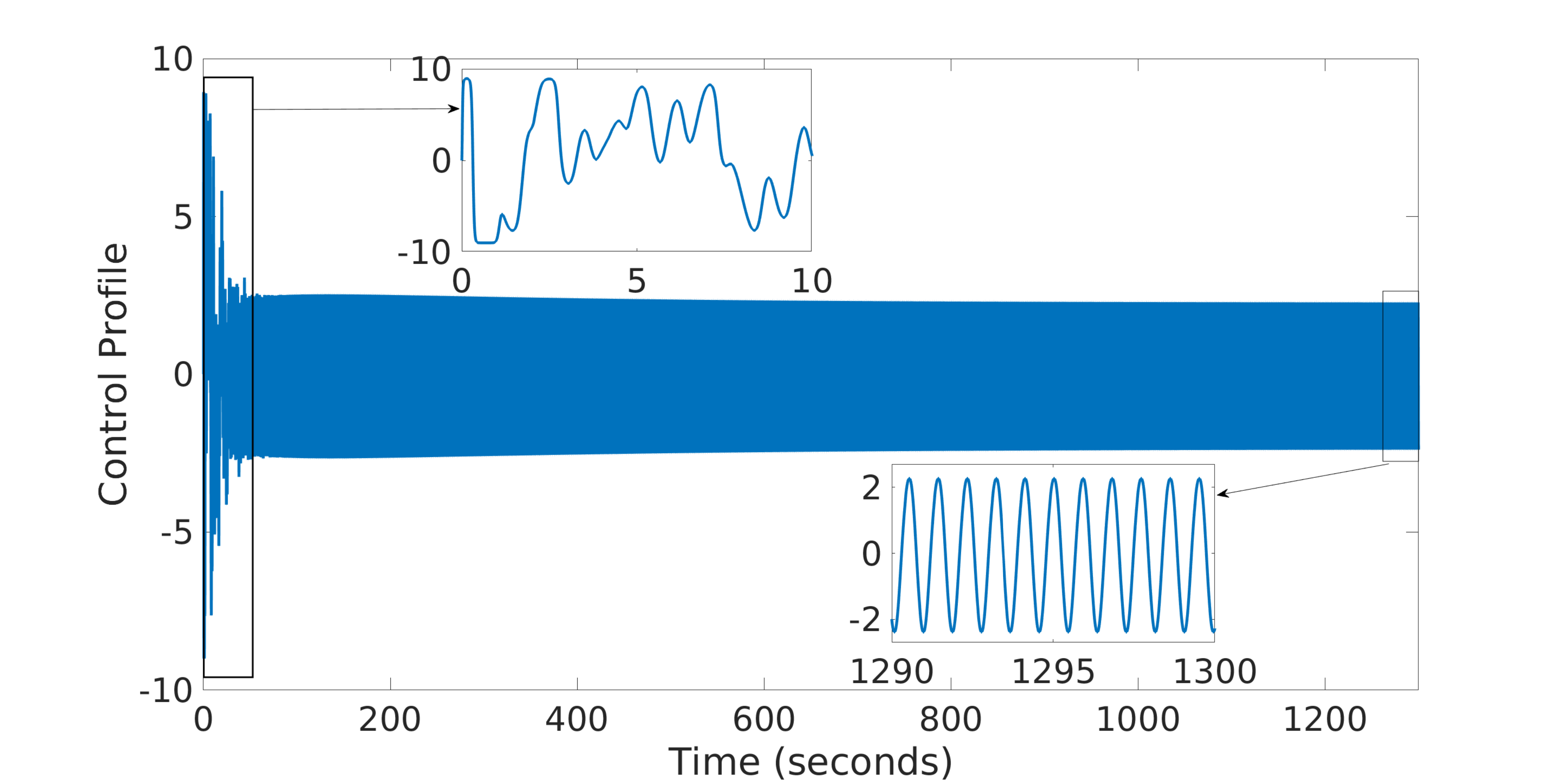}}%
\caption{Critic NN, state error and control profile with constant learning gradient descent (for $u_m=9$)}
\label{fig:9_wout}
\end{figure*}
\begin{figure*}
\centering
\subcaptionbox{Online training of critic weights\label{crit1.8_with}}{\includegraphics[width=.32\textwidth,height=9.5cm,keepaspectratio,trim={1.8cm 0.0cm 2cm .08cm},clip]{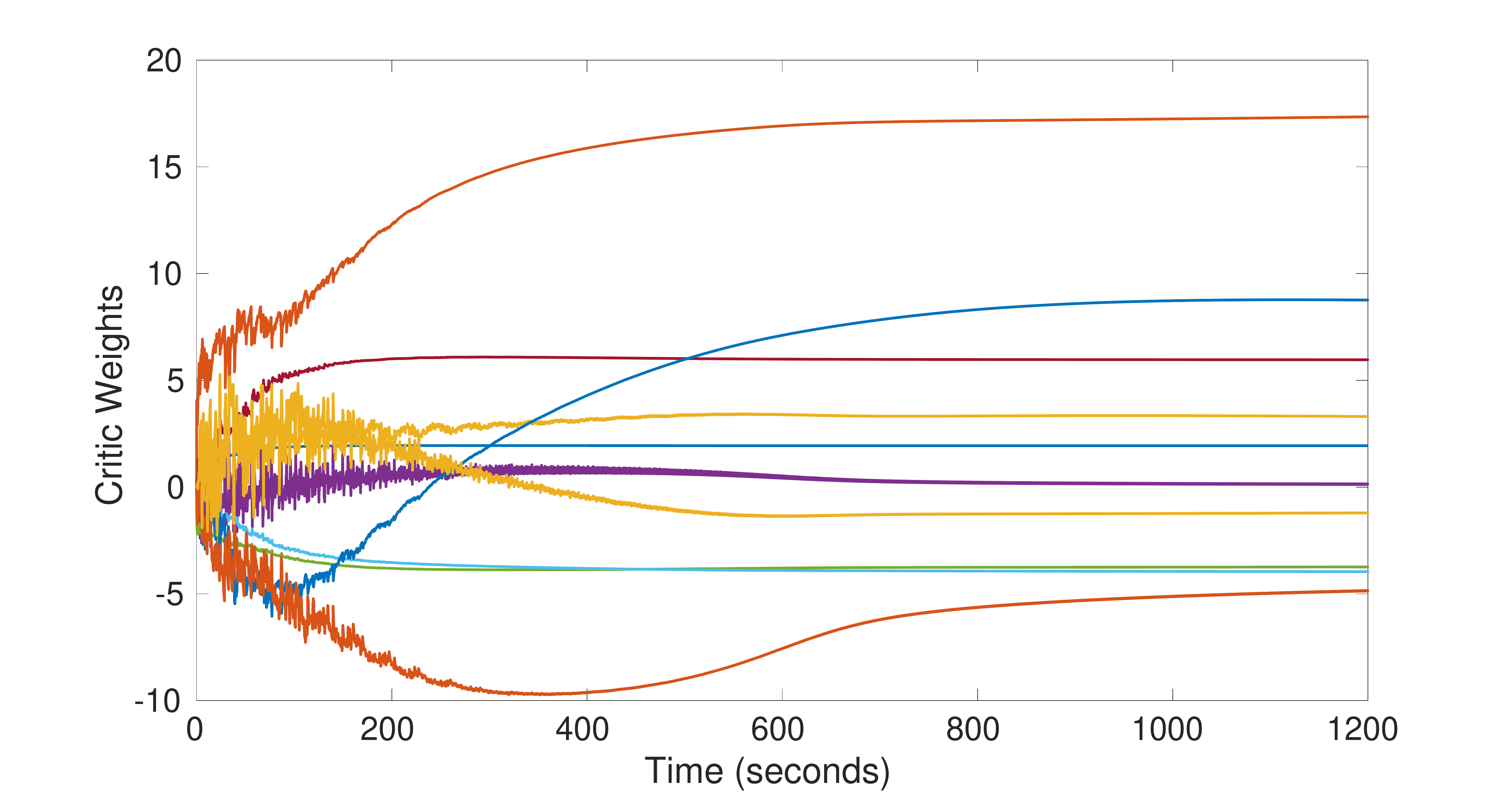}}%
\hspace{0cm} 
\subcaptionbox{State error\label{er1.8_with}}{\includegraphics[width=.32\textwidth,height=9.5cm,keepaspectratio,trim={1.8cm 0.0cm 2cm .08cm},clip]{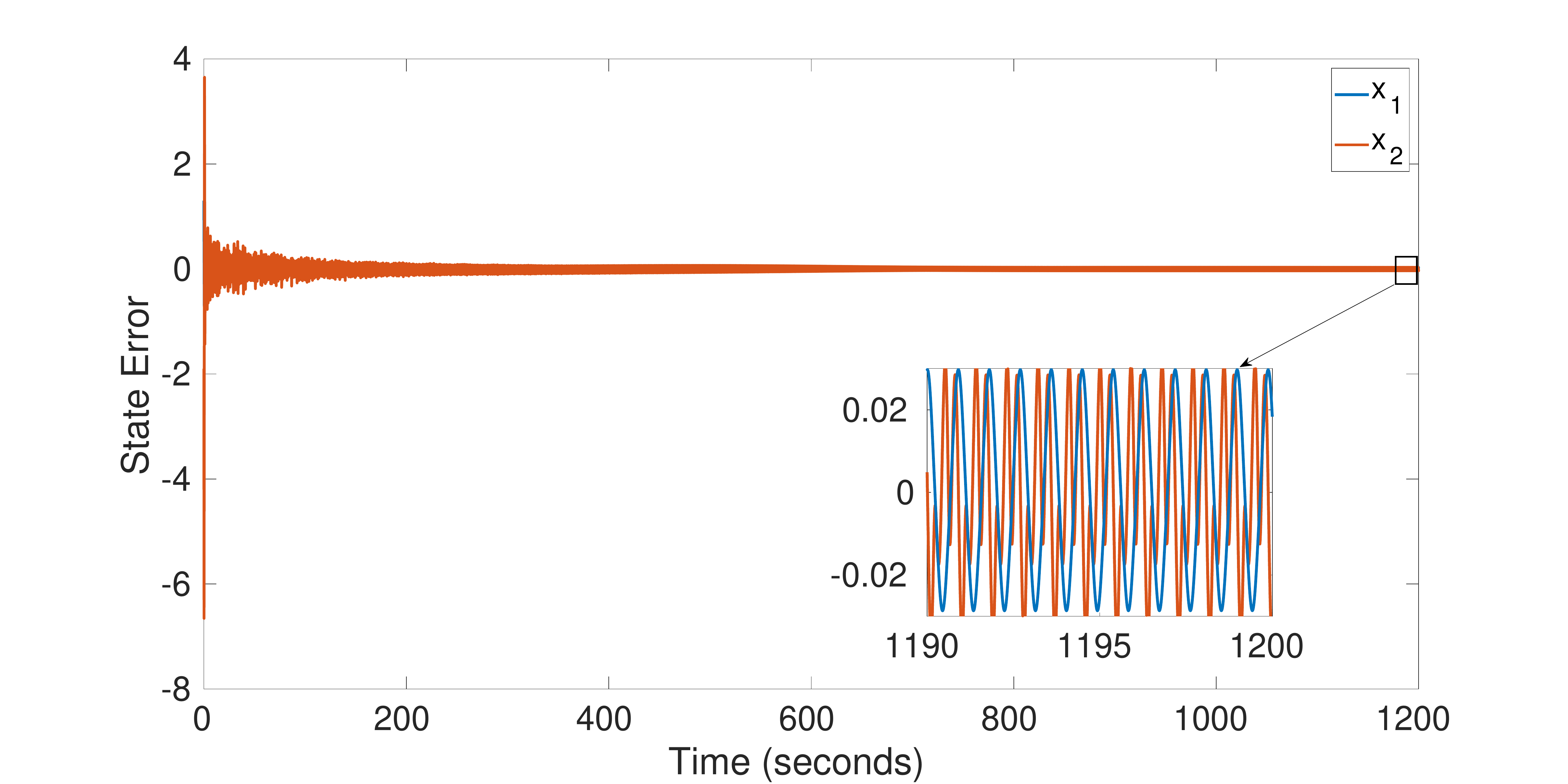}}%
\hspace{0cm} 
\subcaptionbox{Control action\label{ctr1.8_with}}{\includegraphics[width=.32\textwidth,height=9.5cm,keepaspectratio,trim={1.8cm 0.0cm 2cm .08cm},clip]{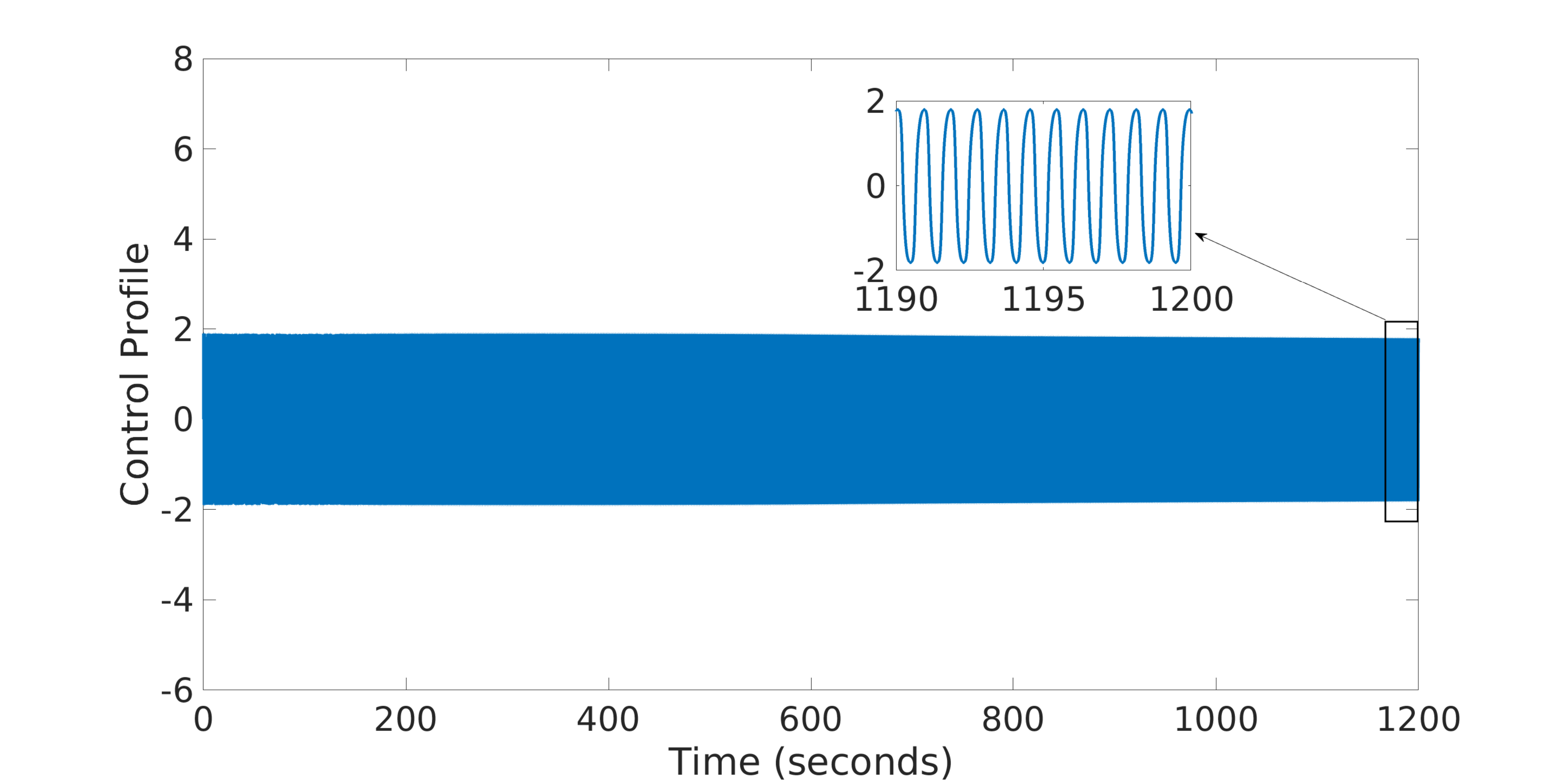}}%
\caption{Critic NN, state error and control profile with variable gain gradient descent (for $u_m=1.8$)}
\label{fig:1.8_with}
\end{figure*}

\begin{figure*}
\centering
\subcaptionbox{Online training of critic weights\label{crit1.8_wout}}{\includegraphics[width=.32\textwidth,height=9.5cm,keepaspectratio,trim={1.8cm 0.0cm 2cm .08cm},clip]{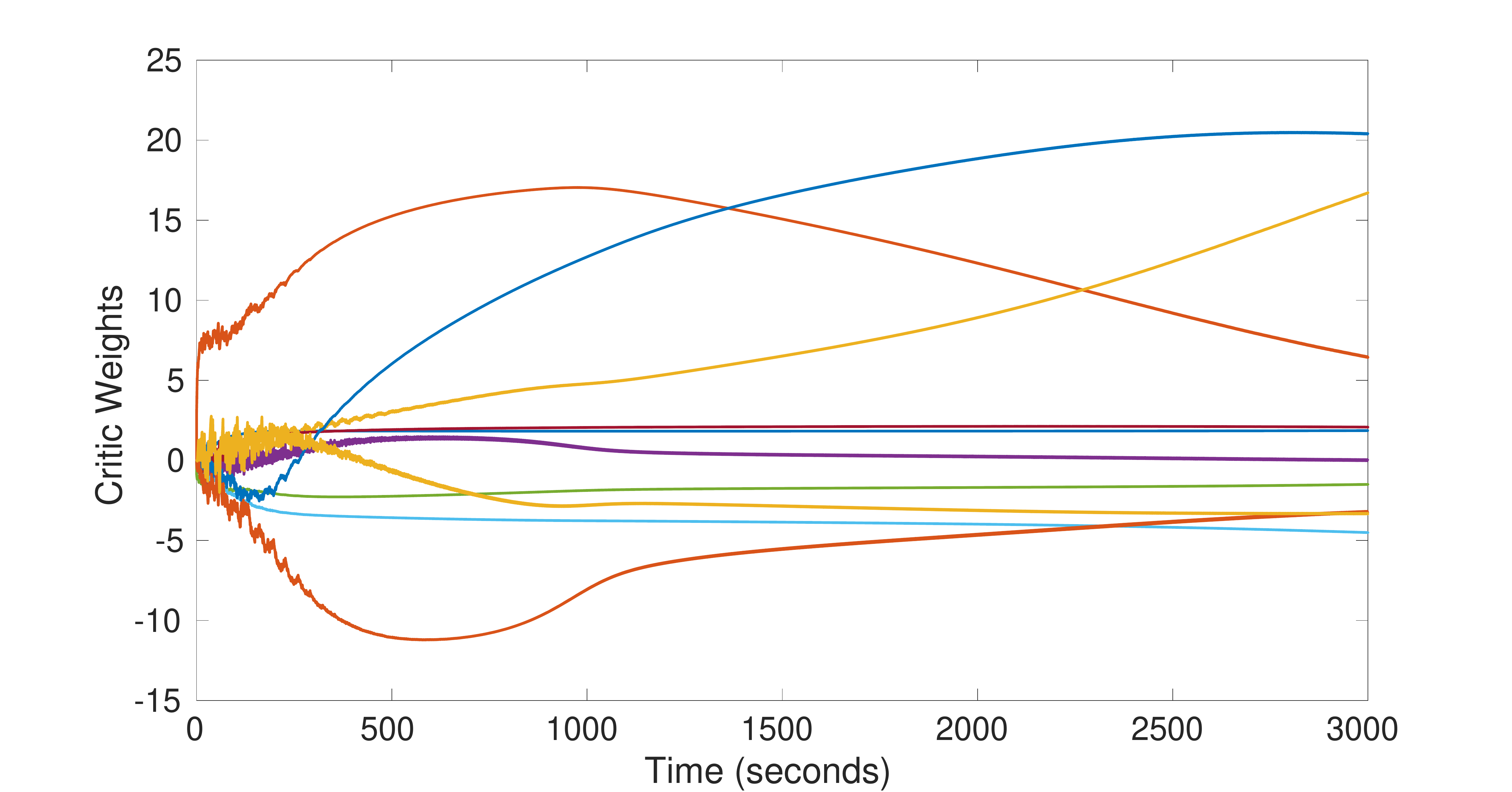}}%
\hspace{0cm} 
\subcaptionbox{State error\label{er1.8_wout}}{\includegraphics[width=.32\textwidth,height=9.5cm,keepaspectratio,trim={1.8cm 0.0cm 2cm .08cm},clip]{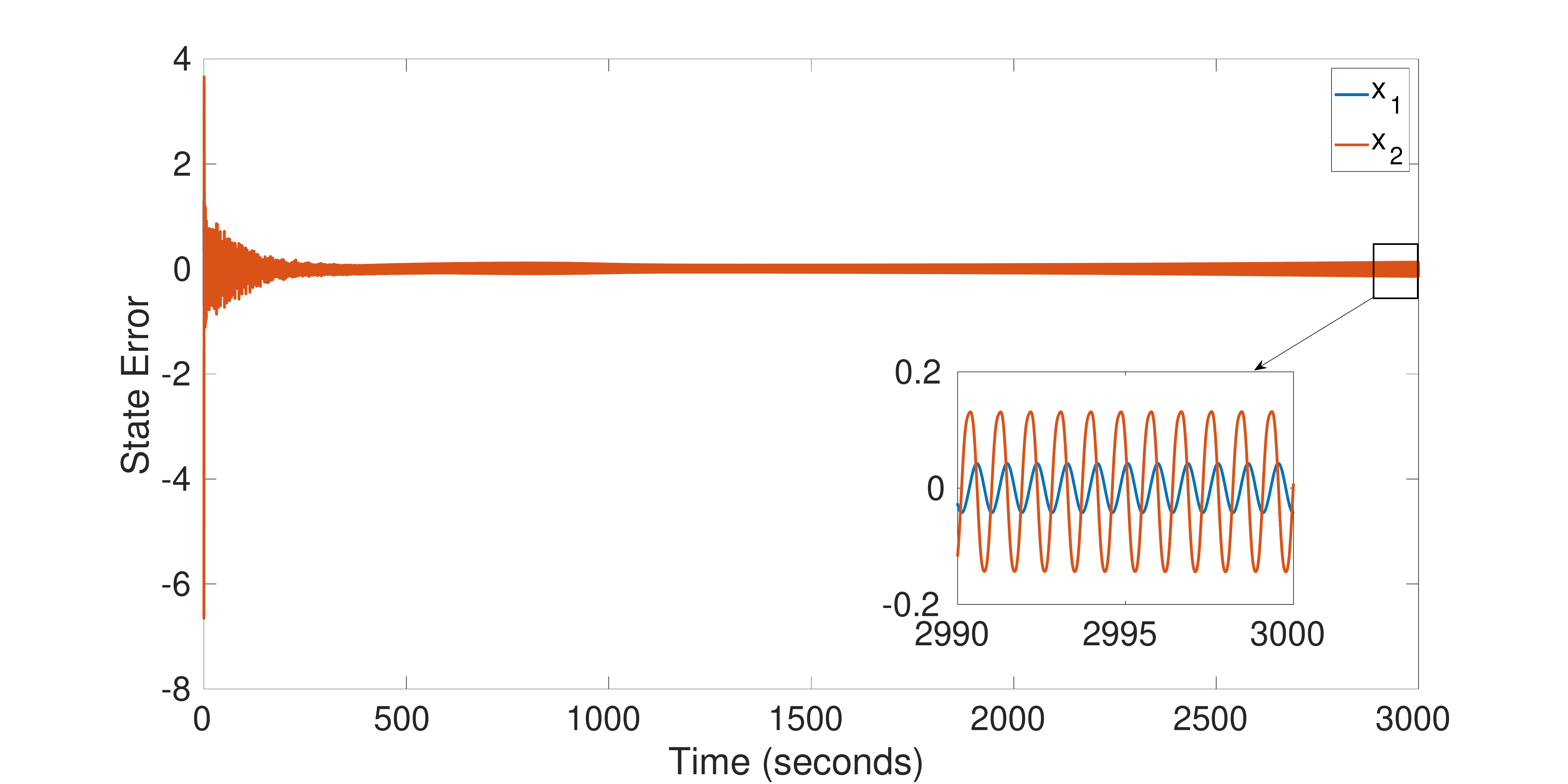}}%
\hspace{0cm} 
\subcaptionbox{Control action\label{ctr1.8_wout}}{\includegraphics[width=.32\textwidth,height=9.5cm,keepaspectratio,trim={1.8cm 0.0cm 2cm .08cm},clip]{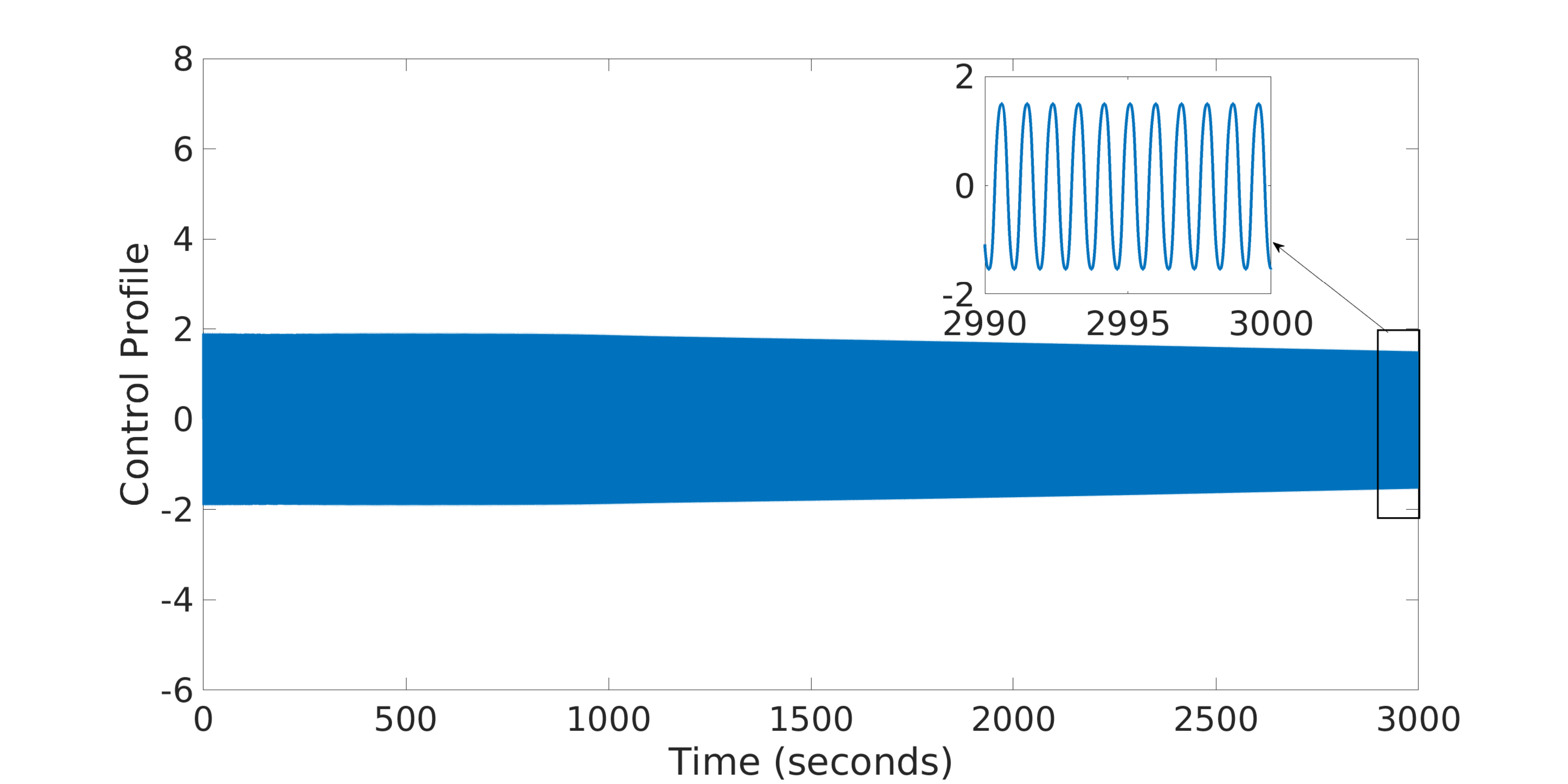}}%
\caption{Critic NN, State error and Control Profiles with constant learning gradient descent (for $u_m=1.8$)}
\label{fig:1.8_wout}
\end{figure*}
Comparing Figs. \ref{crit9} and \ref{crit9_wout}, it can be observed that the application of variable gain gradient descent leads to faster and efficient learning of critic NN weights. 
In Fig. \ref{crit9} when the variable gain gradient descent algorithm was utilized, the critic NN weights converged much before 250s, whereas in Fig. \ref{crit9_wout}, they took approximately 1300s.
Observe that the update law presented in this paper is able to bring the state error to a much tighter residual set compared to constant learning rate-based gradient descents as can be seen from Figs. \ref{er9} and \ref{er9_wout}.
This is due to the fact that, the evolution of weight vectors with variable gain gradient descent converges to a smaller neighborhood about the ideal weight vector than with constant learning rate gradient descent.

It is noted from Figs. \ref{ctr9} and \ref{ctr9_wout} that the optimal control commands generated were within the saturation limit of $[-9,9]$.
Also, most of the time, the control effort was well within this bounded interval $[-2.2,2.2]$. 
Hence, in order to study the performance of the proposed adaptation scheme in a more stringent setting, a tighter control saturation limit $u_m = 1.8<2.2$ would be considered next. 

A tighter input saturation has an adverse effect on learning as it takes more time to achieve convergence of critic NN weights as can be seen from Figs.  \ref{crit1.8_with} and \ref{crit1.8_wout}.
In contrast to constant learning gradient descents, variable gain gradient descent-based update law presented in this paper is able to not only achieve convergence of critic NN weights within 1200s (refer to Fig. \ref{crit1.8_with}) but also bring the state error (refer to Fig. \ref{er1.8_with}) to a tight residual set comparable to Fig \ref{er9}.
On the other hand, under constant learning-based update law, some of the critic NN weights were not able to converge properly even after 3000s (refer to Fig. \ref{crit1.8_wout}).
This results in larger state error as can be seen in Fig. \ref{er1.8_wout}.

It is because of these reasons, that the update law presented in this paper yields improved tracking performance even with tight actuator constraints.
The control effort was limited to $[-1.8,1.8]$ for both with/without variable gain gradient descent update law as can be seen in Figs. \ref{ctr1.8_with} and \ref{ctr1.8_wout}.




In \cite{yang2015robust}, a smaller value of learning rate was selected to train the critic network online, however, following their formulation, it takes a lot more time for the controller to bring the oscillation magnitude of state error down to a small bound, which can be clearly seen in Fig. 1 in \cite{yang2015robust}.
Also in \cite{yang2015robust}, the control effort during the initial phases touches $[-20,20]$ and does not incorporate actuator constraints. 
As it can be inferred from Figs. \ref{er9_wout} and \ref{er1.8_wout} that a high constant learning rate leads to larger oscillation bound on states error compared to the case when variable gain gradient descent (see Figs. \ref{er9} and \ref{er1.8_with}) was utilized. 
It can be clearly concluded from Figs. \ref{fig:9_with} and \ref{fig:1.8_with} that, the variable gain gradient descent based tuning law proposed in this paper yields faster learning and is able to successfully bring the state error to a much tighter residual set than constant learning rate for both actuator contraints limits considered in this paper, i.e., $u_m=9$ and $u_m=1.8$. 

Thus, the prime advantage of variable gain gradient descent-based critic update law is the ability to select reasonably high learning rates without large steady state errors.  
\subsection{Full 6-DoF nonlinear model of UAV}
\begin{figure*}
\centering
\subcaptionbox{Desired and actual attitude profile\label{fig:att_w}}{\includegraphics[width=.32\textwidth,height=9.5cm,keepaspectratio,trim={1.8cm 0.0cm 4cm .08cm},clip]{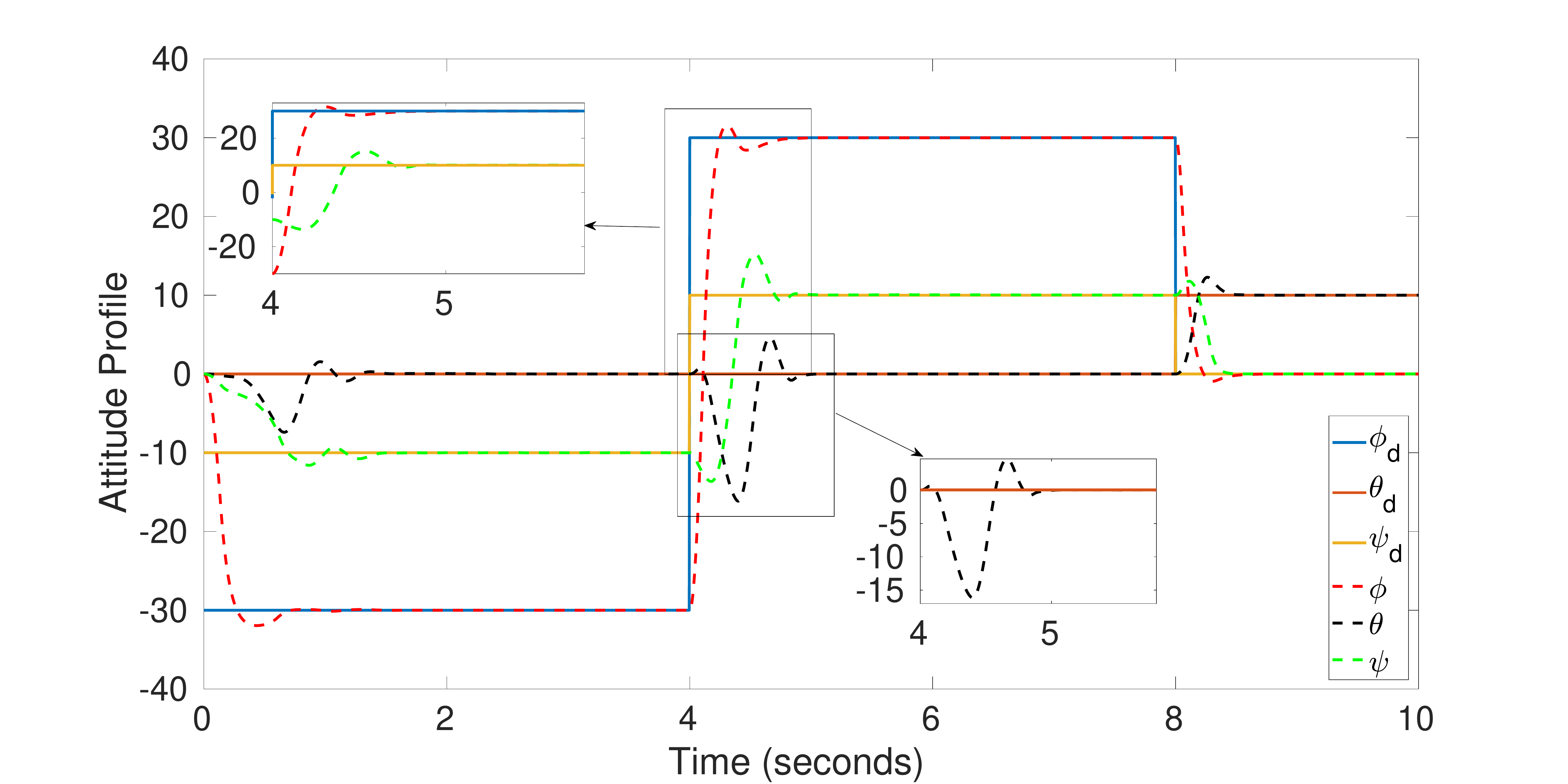}}%
\hspace{0cm} 
\subcaptionbox{Control profile representing elevator, eileron and rudder\label{fig:ctr_w}}{\includegraphics[width=.32\textwidth,height=9.5cm,keepaspectratio,trim={1.8cm 0.0cm 2cm .08cm},clip]{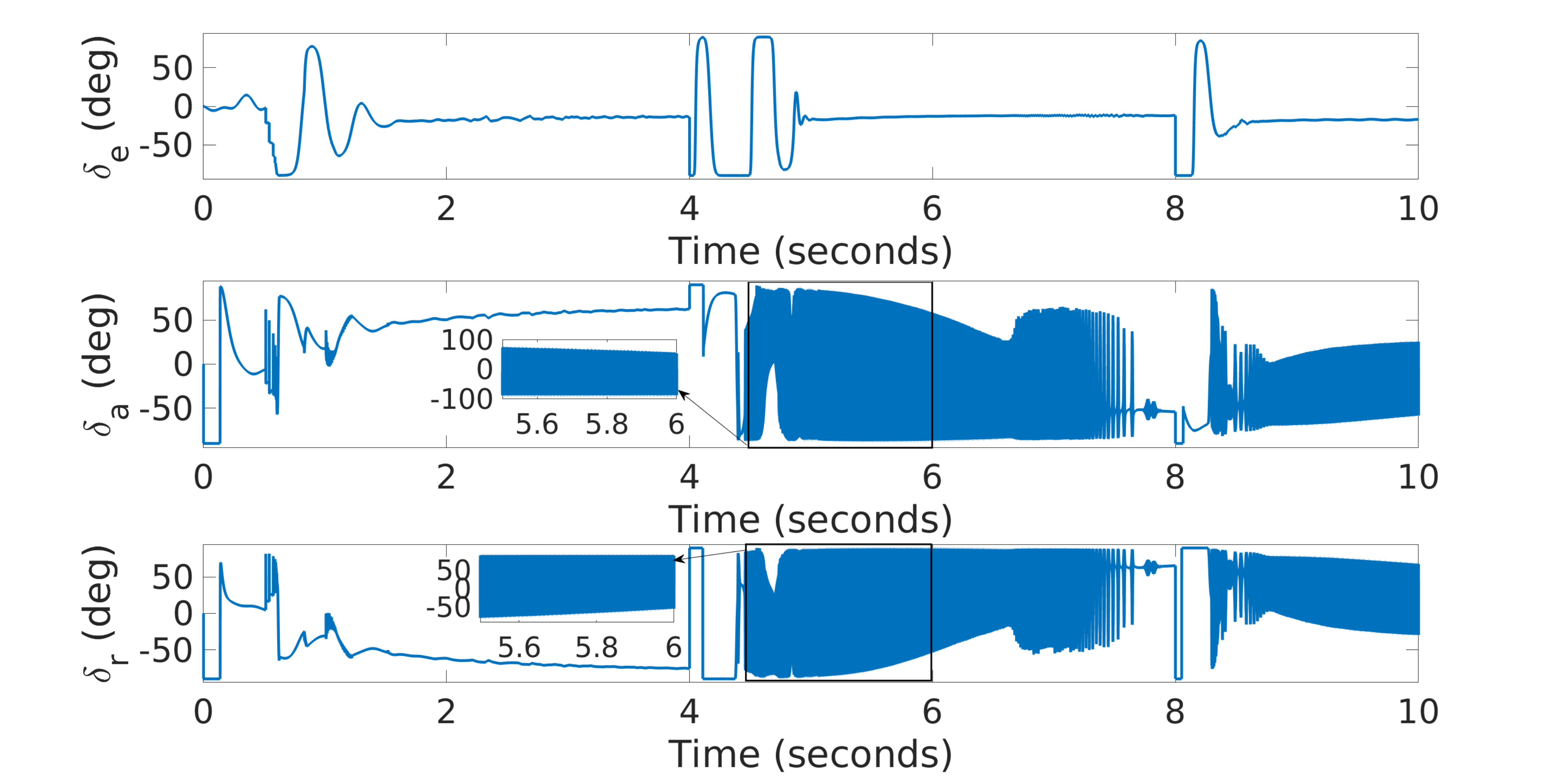}}%
\hspace{0cm}
\subcaptionbox{Online training of critic NN weights\label{fig:crit_w}}{\includegraphics[width=.32\textwidth,height=9.5cm,keepaspectratio,trim={1.8cm 0.0cm 2cm .08cm},clip]{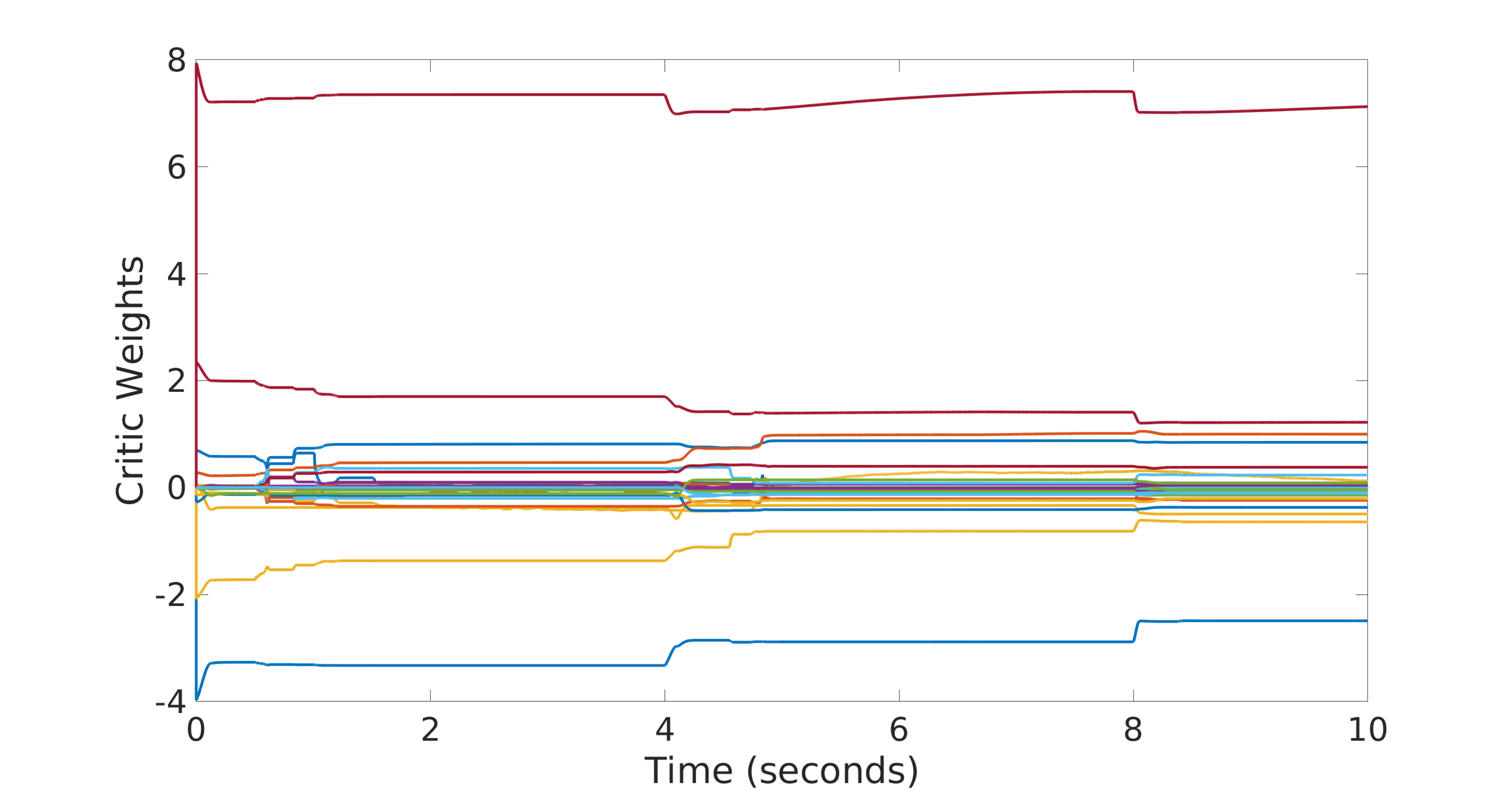}}
%
\caption{Performance of Aerosonde UAV under constant learning-based update law}
\label{fig:e_states1}
\end{figure*}
In this subsection, the variable gain gradient descent update law presented in this paper is validated on the full 6-DoF nonlinear model of the Aerosonde UAV (refer to Pages 61 and 276 of \cite{beard2012small}), and its performance is compared against that of the constant learning-based update law. The requirement of the control scheme is to track the desired attitude angles of the UAV. Desired set points for $\phi,\theta,\psi$ (roll, pitch and yaw angle, respectively) were set to $(-30,0,-10)$ degrees for first $4$ seconds, then $(30,0,10)$ degrees from 4-8 seconds and finally $(0,0,10)$ degrees.

The control implementation is made up of two cascaded loops, the the first loop, i.e., outer loop converts the desired Euler angle information to desired rates, the inner loop uses the developed control algorithm to track the desired rates in an optimal way. Desired Euler angle rates are given by, $p_{des}=\dot{\phi}_{des}-8 e_{\phi},~q_{des}=\dot{\theta}_{des}-10 e_{\theta},~r_{des}=\dot{\psi}_{des}-12 e_{\psi}$, where $\phi,\theta,\psi$ are roll, pitch and yaw angles, respectively. 
The deflection of elevator, aileron and rudder forms the control input to the UAV (represented by $\delta_e,\delta_a,\delta_r$ respectively). The control deflections are limited to $\pm 90$ degrees.

The augmented state is, $z=[e_{p},e_{q},e_{r},p_{des},q_{des},r_{des}]^T \in \mathbb{R}^6$ where $e=x-x_{des}$ and $x=[p,q,r]^T$. 
The regressor vector for critic NN is chosen to be, $\vartheta=[z_1,z_2,z_3,z_4,z_5,z_6,z_1^2,z_2^2,z_3^2,z_4^2\\
,z_5^2,z_6^2,z_1z_2,z_1z_3,z_1z_4,z_1z_5,z_1z_6,z_2z_3,z_2z_4,z_2z_5,z_2z_6,z_3z_4\\
,z_3z_5,z_3z_6,z_4z_5,z_4z_6,z_5z_6]^T$.
The discount factor was selected as $\gamma=0.1$, the weight matrix for augmented states and control are  $Q_1=diag(10,10,50,0,0,0)$ and $R=I_3$, respectively. 
The baseline learning rate $\alpha=14.7$, parameters for variable gain gradient descent are 
$k_2=.1$.
A dithering noise of the form, $n(t)=2e^{-0.009t}(\sin(11.9t)^2\cos(19.5t)+\sin(2.2t)^2\cos(5.8t)+\sin(1.2t)^2\cos(9.5t)+\sin(2.4t)^5)$ is added to maintain the persistent excitation (PE) condition as demonstrated in \cite{vamvoudakis2014online}. All the critic weights were initialized to $0$, i.e., $\hat{W}(0)=0$.
\begin{figure*}
\centering
\subcaptionbox{Desired and actual attitude profile\label{fig:att}}{\includegraphics[width=.32\textwidth,height=9.5cm,keepaspectratio,trim={1.8cm 0.0cm 4cm .08cm},clip]{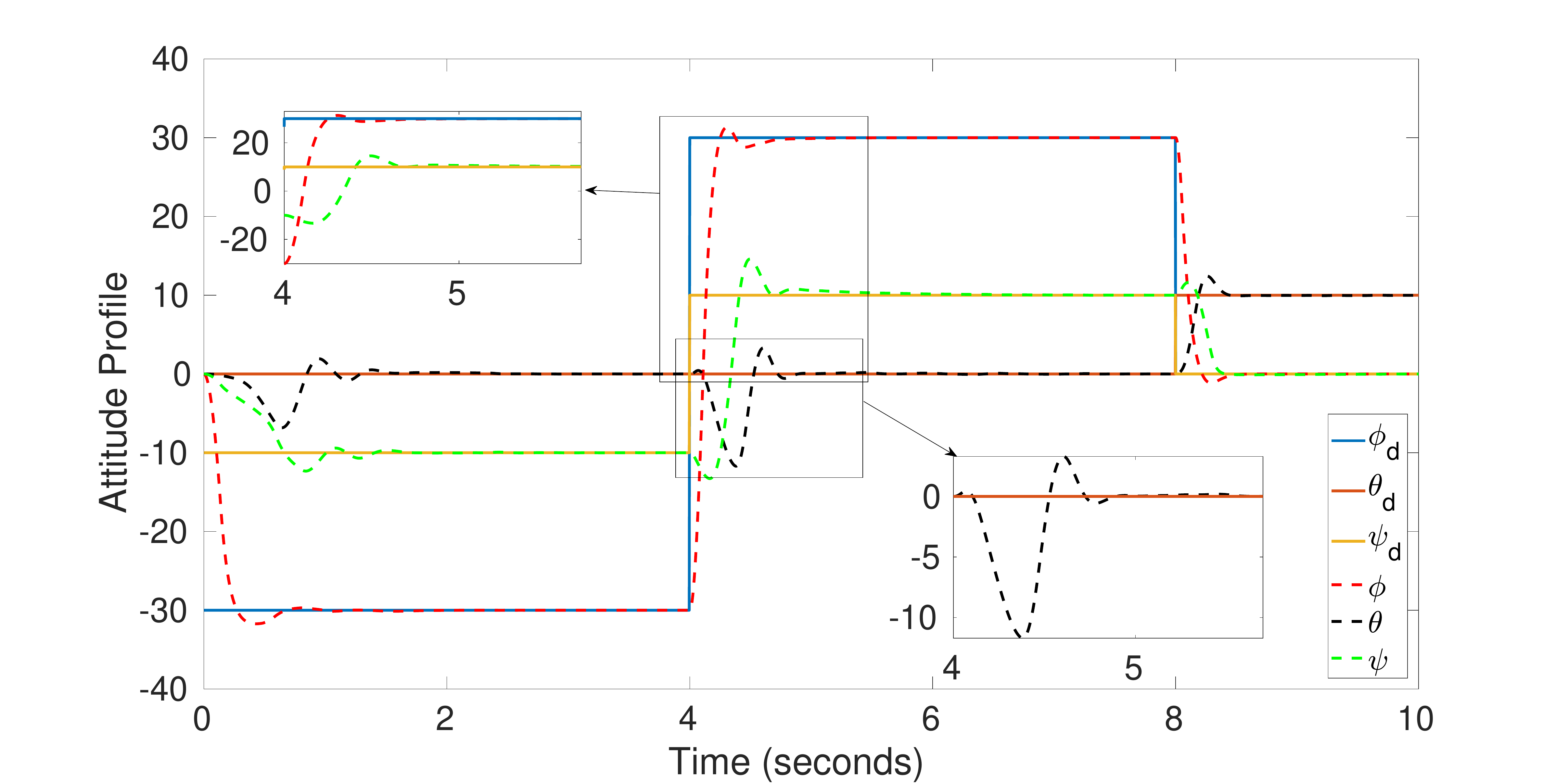}}%
\hspace{0cm} 
\subcaptionbox{Control profile representing elevator, eileron and rudder\label{fig:ctr}}{\includegraphics[width=.32\textwidth,height=9.5cm,keepaspectratio,trim={1.8cm 0.0cm 2cm .08cm},clip]{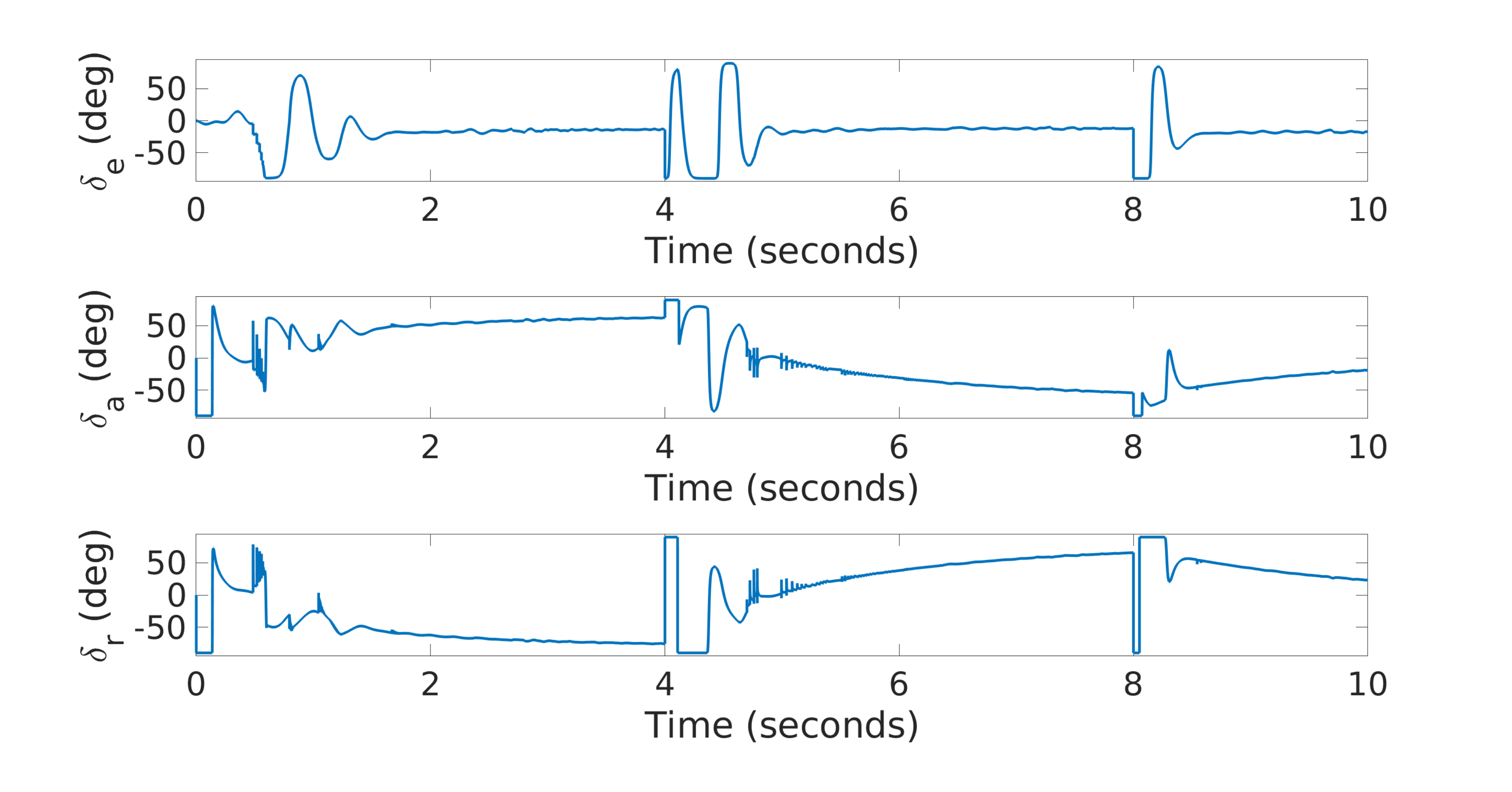}}%
\hspace{0cm}
\subcaptionbox{Online training of critic NN weights\label{fig:crit}}{\includegraphics[width=.32\textwidth,height=9.5cm,keepaspectratio,trim={1.8cm 0.0cm 2cm .08cm},clip]{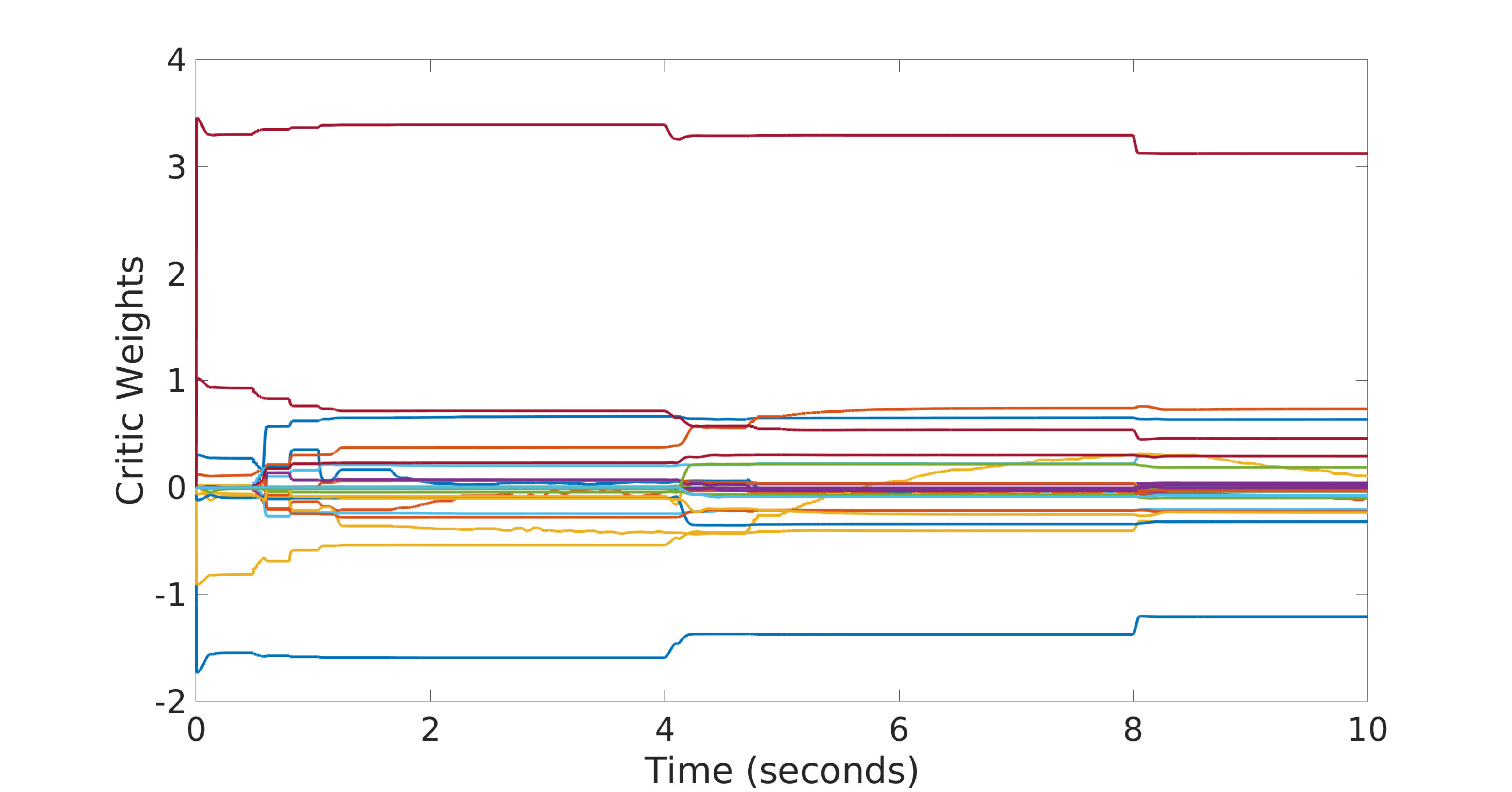}}
%
\caption{Performance of Aerosonde UAV under variable gain gradient descent-based update law}
\label{fig:e_states}
\end{figure*}
Both the update laws, i.e., variable gain gradient descent and constant learning rate update law were run with same set of parameters except the exponents in variable gain term and are able to track the desired reference set point.
However, it can be seen from Figs. \ref{fig:crit} and \ref{fig:crit_w} that the critic weights undergo spike at the times when reference command for attitude changes i.e., at $4$ and $8$ second. 
At these junctures it can be clearly noticed that the critic weights when updated via the variable gain gradient descent-update law converge properly within sufficiently short time-span and before the next attitude reference signal changes, i.e., during the intervals, $0-4$ sec, $4-8$ sec and then finally $8-10$ sec.
However, on the other hand, the critic weights are not able to converge properly in such short time-span between changes in the reference attitude when updated via constant learning-based update law.
As critic weights have converged close to their ideal values in very small time when updated via the variable gain gradient descent update law, overshoots in state errors in this case are smaller in comparison with that in case of constant gain gradient descent, as can be seen from Figs. \ref{fig:att} and \ref{fig:att_w}. This effect is especially prominent in pitch $(\theta)$ dynamics. 
Additionally, the optimal control action (refer to Fig. \ref{fig:ctr}) generated via the variable gain gradient descent is much smoother compared to the control action (refer to \ref{fig:ctr_w}) generated via the constant learning based method, which is found to lead to persistent chattering in control command. The control effort in both these cases is bounded within $\pm 90$ degrees.

Based on the above discussion it can be inferred that the variable gain gradient descent-based update law leads to faster convergence of critic weights closer to their ideal values. This in turn leads to achieving the ideal optimal tracking controller faster resulting into smaller overshoot in state error. Further, the control action generated by variable gain update law is devoid of chattering for the same set of actuator constraints. 

\section{Conclusion}\label{conclusion}
The paper presents a variable gain gradient descent based update law for robust optimal tracking for continuous time nonlinear systems using reinforcement learning. 
The critic neural network (NN) is utilized to approximate the value function which is also the solution of the tracking HJB equation.
It is this critic NN that is tuned online using the update law presented in this paper.
The hallmarks of this update law stems from the fact that it can adjust its learning rate based on the HJB error. 
The tuning law speeds up the learning process if the HJB error is large and it slows it down as the HJB error becomes small. 
In addition to this, the parameter update law presented in this paper leads to smaller convergence times of critic NN weights and tighter residual set over which the augmented system trajectories converge to.
The update law presented in this paper forms the basis of future scope of research using which model-free online update law to solve optimal tracking problem will be developed.


%

\section{Appendices}\label{L1}
\begin{lemma}\label{cu1}
Following equality holds true,
\begin{equation}
\begin{split}
 &2u_m\int_0^{-u_m\tanh{A(z)}}\tanh^{-1}(\nu/u_m)^TRd\nu\\
&=2u_m^2A^T(z)R\tanh{A(z)}+
u_m^2\sum_{i=1}^{m}R_i\ln[1-\tanh^2{A_i(z)}]   
\end{split}
\end{equation}
\begin{proof}
\begin{equation}
\begin{split}
\int \tanh^{-1}\Big(\frac{x}{a}\Big)=\frac{1}{2}a\ln{(a^2-x^2)}+x\tanh^{-1}\Big(\frac{x}{a}\Big)+I
\end{split}
\end{equation}
Therefore, 
\begin{equation}
\small
\begin{split}
&\int_0^u \tanh^{-1} \Big(\frac{\nu}{u_m}\Big)d\nu=\frac{1}{2}u_m\ln{(u_m^2-\nu^2)}+\nu\tanh^{-1}\Big(\frac{\nu}{u_m}\Big)\Big|_0^u\\
&2u_m\int_0^u \tanh^{-1} \Big(\frac{\nu}{u_m}\Big)d\nu=u_m^2\ln{(u_m^2-\nu^2)}+2u_m\nu\tanh^{-1}\Big(\frac{\nu}{u_m}\Big)\Big|_0^u\\
&=u_m^2\ln{(1-\frac{u^2}{u_m^2})}+2u_m^2\tanh{A(z)}\\
&=u_m^2\ln{(1-\tanh^2{A(z)})}+2u_m^2\tanh{A(z)}\\
\end{split}
\end{equation}
where, $u=-u_m\tanh{A(z)}$ is scalar. Now if $u$ is a vector, then,
\begin{equation}
\small
\begin{split}
& 2u_m\int_0^u \tanh^{-1} \Big(\frac{\nu}{u_m}\Big)Rd\nu\\
 &=2u_m^2A^T(z)R\tanh{A(z)}+
u_m^2\sum_{i=1}^{m}R_i\ln[1-\tanh^2{A_i(z)}]
\end{split}
\end{equation}
\end{proof}
\end{lemma}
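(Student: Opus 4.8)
The plan is to reduce the vector line integral on the left-hand side to a sum of scalar integrals and to evaluate each one with a single elementary antiderivative. First I would record the indefinite integral
\begin{equation}
\int \tanh^{-1}\!\Big(\frac{x}{a}\Big)\,dx=\frac{a}{2}\ln(a^2-x^2)+x\,\tanh^{-1}\!\Big(\frac{x}{a}\Big)+C,
\end{equation}
and verify it by differentiating the right-hand side: the term $x\,\tanh^{-1}(x/a)$ produces $\tanh^{-1}(x/a)+\frac{x/a}{1-x^2/a^2}$, the term $\frac{a}{2}\ln(a^2-x^2)$ produces $\frac{-ax}{a^2-x^2}$, and the two rational pieces cancel, leaving precisely $\tanh^{-1}(x/a)$.

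Next I would exploit the (diagonal, hence symmetric) structure of $R$ to split the line integral component-wise, writing $\int_0^{u}(\tanh^{-1}(\nu/u_m))^{T}R\,d\nu=\sum_{i=1}^{m}R_i\int_0^{u_i}\tanh^{-1}(\nu_i/u_m)\,d\nu_i$; this separation is exactly the step that needs the integrand to be a conservative field, which is guaranteed here because $R$ is symmetric. Setting $a=u_m$ and multiplying by the prefactor $2u_m$, the antiderivative evaluated between $\nu_i=0$ and the upper limit $u_i=-u_m\tanh A_i(z)$ yields the two boundary forms $u_m^2\ln(u_m^2-\nu_i^2)$ and $2u_m\nu_i\tanh^{-1}(\nu_i/u_m)$. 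At the upper limit I would use $\tanh^{-1}(u_i/u_m)=-A_i(z)$ and $u_i^2=u_m^2\tanh^2 A_i(z)$, so that $\ln(u_m^2-u_i^2)=\ln(u_m^2)+\ln(1-\tanh^2 A_i(z))$, while the lower limit $\nu_i=0$ contributes only $u_m^2\ln(u_m^2)$ together with a vanishing boundary term.

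The decisive cancellation is that the spurious constant $u_m^2\ln(u_m^2)$ from the upper limit is exactly annihilated by the lower-limit term, leaving $u_m^2\ln(1-\tanh^2 A_i(z))$ from the logarithm and $2u_m^2 A_i(z)\tanh A_i(z)$ from the boundary term for each $i$. Summing over $i$ with the weights $R_i$ then reassembles $\sum_{i=1}^{m} R_i\,A_i(z)\tanh A_i(z)=A^{T}(z)R\tanh A(z)$ and $\sum_{i=1}^{m} R_i\ln(1-\tanh^2 A_i(z))$, which is the claimed identity. The only genuine obstacle is careful bookkeeping: tracking the $\ln(u_m^2)$ term through both limits so that it cancels, and keeping the sign conventions straight in $\tanh^{-1}(-\tanh A_i(z))=-A_i(z)$; the vector-to-scalar reduction and the antiderivative itself are routine once the symmetry of $R$ is invoked.
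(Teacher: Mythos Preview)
Your proposal is correct and follows essentially the same approach as the paper: both invoke the antiderivative $\int \tanh^{-1}(x/a)\,dx=\tfrac{a}{2}\ln(a^2-x^2)+x\,\tanh^{-1}(x/a)+C$, evaluate it componentwise at the upper limit $u_i=-u_m\tanh A_i(z)$, and then reassemble the vector form with the weights $R_i$. Your version is a bit more careful in verifying the antiderivative, tracking the $\ln(u_m^2)$ cancellation between the limits, and noting the diagonal structure of $R$ needed for the componentwise split, but the underlying argument is the same.
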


\begin{lemma}\label{cu}
Following inequality holds true:
\begin{equation}
 C(u_i)=2u_m\int_{0}^{u_i}\psi^{-1}(\frac{\nu}{u_m})R_id\nu   \geq 0
\end{equation}
if $\psi^{-1}$ is monotonic odd and increasing and $R_i>0$. Where $u_i \in \mathbb{R},i=1,2,...,m$
\end{lemma}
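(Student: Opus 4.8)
The plan is to treat $C(u_i)$ as a scalar function of its upper integration limit $u_i$ and to show that it is globally minimized at $u_i=0$, where it vanishes. Since the actuator bound satisfies $u_m>0$ (the positive saturation magnitude implied by $|u_i|\le u_m$) and $R_i>0$ by hypothesis, I would first factor these positive constants out and write $C(u_i)=2u_mR_i\,F(u_i)$ with $F(u_i)=\int_0^{u_i}\psi^{-1}(\nu/u_m)\,d\nu$, so that it suffices to establish $F(u_i)\ge 0$ for every $u_i\in\mathbb{R}$.

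The key structural fact I would extract from the hypotheses is the sign of the integrand. Because $\psi^{-1}$ is odd it passes through the origin, $\psi^{-1}(0)=0$, and because it is monotonically increasing this forces $\psi^{-1}(s)>0$ for $s>0$ and $\psi^{-1}(s)<0$ for $s<0$. As $u_m>0$, the scaling $\nu\mapsto\nu/u_m$ preserves sign, so $\psi^{-1}(\nu/u_m)$ carries the same sign as $\nu$ throughout the interval of integration.

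From here I would finish by a first-derivative argument. By the fundamental theorem of calculus $F'(u_i)=\psi^{-1}(u_i/u_m)$, which is positive for $u_i>0$ and negative for $u_i<0$; hence $F$ strictly decreases on $(-\infty,0)$ and strictly increases on $(0,\infty)$, making $u_i=0$ its unique global minimizer. Since $F(0)=0$, we obtain $F(u_i)\ge 0$ and therefore $C(u_i)\ge 0$. Equivalently, for $u_i<0$ one may substitute $\nu\mapsto-\nu$ and invoke oddness to rewrite $F(u_i)=\int_0^{|u_i|}\psi^{-1}(s/u_m)\,ds$, whose integrand is nonnegative on the whole range, yielding the same conclusion.

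There is essentially no hard obstacle in this argument; the only points requiring genuine care are the sign bookkeeping when $u_i<0$ and making explicit the ambient assumption $u_m>0$, both of which the monotone-and-odd structure handles cleanly. Applying the same reasoning componentwise over $i=1,\dots,m$ then recovers positivity of the full non-quadratic cost $C(u)$ in \eqref{ctr_cu}, which is the fact invoked when simplifying the HJB equation earlier in the paper.
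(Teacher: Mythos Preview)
Your argument is correct and rests on the same key observation as the paper's proof, namely that oddness plus monotonicity of $\psi^{-1}$ forces $\nu\,\psi^{-1}(\nu/u_m)\ge 0$, so the integrand carries the sign of $\nu$. The only presentational difference is that the paper handles both signs of $u_i$ in one stroke via the substitution $l=u_i\mathcal{K}$, turning the integral into $\frac{1}{\theta}\int_0^{\theta}u_i\,\psi^{-1}(u_i\mathcal{K})\,d\mathcal{K}$ with a manifestly nonnegative integrand over a positive interval, whereas you invoke the fundamental theorem of calculus to locate the global minimum of $F$ at $0$; both routes are equivalent and equally short.
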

\textbf{Proof}: If $\psi^{-1}$ is monotonic odd and increasing, then,
\begin{equation}
\begin{split}
 \Big(\frac{\nu}{u_m}\Big)\psi^{-1}\Big(\frac{\nu}{u_m}\Big) \geq 0
\end{split}
\end{equation}
or 
\begin{equation}
\nu\psi^{-1}\Big(\frac{\nu}{u_m}\Big) \geq 0
\end{equation}
where $\nu\in \mathbb{R}$ and $u_m>0$. Let $\theta=1/u_m$. 
In order to prove that, $2u_m\int_{0}^{u_i}\psi^{-1}(\nu/u_m)R_id\nu \geq 0$, it is enough to prove that, $\int_{0}^{u_i}\psi^{-1}(\nu\theta)d\nu   \geq 0$. 
In order to prove this inequality, a variable, $\mathcal{K}\in[0,\theta]$ is assumed.
Therefore,
\begin{equation}
\begin{split}
\int_{0}^{u_i}\psi^{-1}(\nu\theta)d\nu=\frac{1}{\theta}\int_{0}^{u_i\theta}\psi^{-1}(l)dl
\end{split}
\end{equation}
where $l=\nu\theta$. Similarly, 
\begin{equation}
\begin{split}
 \frac{1}{\theta}\int_{0}^{u_i\theta}\psi^{-1}(l)dl=\frac{1}{\theta}\int_{0}^{\theta}\psi^{-1}(u_i\mathcal{K})u_id\mathcal{K}
\end{split}
\end{equation}
by utilizing $l=u_i\mathcal{K}$\\
Since, $\psi^{-1}(u_i\mathcal{K})u_i \geq 0$, which implies, 
\begin{equation}
\frac{1}{\theta}\int_{0}^{\theta}\psi^{-1}(u_i\mathcal{K})u_id\mathcal{K}\geq 0
\end{equation}
\begin{lemma}\label{mean_val_lem}
Following equation holds true,
\begin{equation}
\begin{split}
u&=-u_m\tanh{\Big(\frac{1}{2u_m}R^{-1}\hat{G}^T \nabla{\vartheta}^TW+\epsilon_{uu}\Big)}=-u_m\tanh{(\tau_1(z))}\\
&+\epsilon_{u}
\end{split}
\label{ustar}
\end{equation}
where, $\small \epsilon_{uu}=(1/2u_m)R^{-1}\hat{G}^T\nabla{\varepsilon}(z)=\small [\varepsilon_{{uu}_{11}},\varepsilon_{{uu}_{12}},...,\varepsilon_{{uu}_{1m}}]^T \in \mathbb{R}^m$.
$\tau_{1}(z)=(1/2u_m)R^{-1}\hat{G}^T \nabla{\vartheta}^TW=[\tau_{11},...,\tau_{1m}]^T \in \mathbb{R}^m$ and 
$\epsilon_{u}=-(1/2)((I_m-diag(\tanh^2{(q)}))R^{-1}\hat{G}^T\nabla{\epsilon})$ 
with $q \in \mathbb{R}^m$ and $q_i \in \mathbb{R}$ considered between $\tau_{1i}+\varepsilon_{uui}$ and $\epsilon_{uui}$ i.e., $i^{th}$ element of $\epsilon_{uu}$.
\end{lemma}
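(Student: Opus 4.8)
The plan is to establish \eqref{ustar} by a component-wise application of the Mean Value Theorem to the hyperbolic tangent, using the elementary identity $\tfrac{d}{ds}\tanh(s)=1-\tanh^2(s)$. Since the argument of the outer $\tanh$ in the first expression of \eqref{ustar} is exactly $\tau_1(z)+\epsilon_{uu}$ (this is just $(1/2u_m)R^{-1}\hat{G}^T\nabla\vartheta^TW$ plus $\epsilon_{uu}$ as defined), and since $\tanh$ acts entry-wise on a vector, the whole identity reduces to $m$ scalar statements, one per coordinate.

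First I would fix an index $i\in\{1,\ldots,m\}$ and apply the Mean Value Theorem to the $C^1$ scalar map $s\mapsto\tanh(s)$ on the interval whose endpoints are $\tau_{1i}$ and $\tau_{1i}+\epsilon_{uui}$. This produces an intermediate point $q_i$ lying between $\tau_{1i}$ and $\tau_{1i}+\epsilon_{uui}$ with
\[
\tanh(\tau_{1i}+\epsilon_{uui})-\tanh(\tau_{1i}) = \bigl(1-\tanh^2(q_i)\bigr)\,\epsilon_{uui}.
\]
Stacking these $m$ identities and writing the scalar multipliers as the diagonal matrix $I_m-\mathrm{diag}(\tanh^2(q))$ with $q=[q_1,\ldots,q_m]^T$ gives the vector relation $\tanh(\tau_1+\epsilon_{uu})=\tanh(\tau_1)+(I_m-\mathrm{diag}(\tanh^2(q)))\epsilon_{uu}$. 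Multiplying by $-u_m$ then yields
\[
u = -u_m\tanh(\tau_1+\epsilon_{uu}) = -u_m\tanh(\tau_1)-u_m\bigl(I_m-\mathrm{diag}(\tanh^2(q))\bigr)\epsilon_{uu}.
\]
Finally I would substitute the definition $\epsilon_{uu}=(1/2u_m)R^{-1}\hat{G}^T\nabla\varepsilon$ into the last term; the outer factor $u_m$ cancels against the $1/(2u_m)$, so the remainder equals $-\tfrac{1}{2}(I_m-\mathrm{diag}(\tanh^2(q)))R^{-1}\hat{G}^T\nabla\epsilon$, which is precisely $\epsilon_u$. This identifies the claimed decomposition $u=-u_m\tanh(\tau_1)+\epsilon_u$ and closes the argument.

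The computation carries no genuine analytical difficulty, so the work is entirely in the bookkeeping. The point requiring the most care is the localization of the intermediate value: the identity in the stated form, namely an expansion \emph{about} $\tau_1$, forces $q_i$ to be taken between $\tau_{1i}$ and $\tau_{1i}+\epsilon_{uui}$ (the two endpoints across which the increment $\epsilon_{uui}$ is linearized). One must also invoke $\tanh'(s)=1-\tanh^2(s)$ so that the diagonal matrix $I_m-\mathrm{diag}(\tanh^2(q))$ appears with the correct sign, and track the scalar factors through the substitution of $\epsilon_{uu}$ so that $u_m$ and $1/(2u_m)$ combine into the $1/2$ that appears in $\epsilon_u$.
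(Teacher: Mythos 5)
Your proposal is correct and follows essentially the same route as the paper's own proof: a component-wise Mean Value Theorem applied to $\tanh$ with derivative $1-\tanh^2$, stacked into the diagonal matrix $I_m-\mathrm{diag}(\tanh^2(q))$, followed by multiplication by $-u_m$ and substitution of $\epsilon_{uu}$ so that the factors combine into the $1/2$ in $\epsilon_u$. Your localization of $q_i$ between $\tau_{1i}$ and $\tau_{1i}+\varepsilon_{uui}$ matches the paper's proof (the lemma statement's phrasing ``between $\tau_{1i}+\varepsilon_{uui}$ and $\epsilon_{uui}$'' appears to be a typo), so there is nothing to add.
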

\begin{proof}
\begin{equation}
u=-u_m\tanh{(\tau_1+\varepsilon_{uu})}
\end{equation}
Using mean value theorem,
\begin{equation}
\begin{split}
 &\tanh{(\tau_1+\varepsilon_{uu})}-\tanh{(\tau_1)}=\tanh^{'}{(q)}\varepsilon_{uu}\\
&=(I_m-diag(\tanh^2{(q)}))\varepsilon_{uu}   
\end{split}
\label{mmt}
\end{equation}
where, $q \in \mathbb{R}^m$ and $q_i \in \mathbb{R}$ lying between $\tau_{1i}$ and $\tau_{1i}+\varepsilon_{uui}$. 

Now, using the expression for $\varepsilon_{uu}$ in (\ref{mmt}), $\tanh{(\tau_1+\varepsilon_{uu})}$ can be rewritten as,
\begin{equation}
\begin{split}
 &\tanh{(\tau_1+\varepsilon_{uu})}=\tanh{(\tau_1)}+(I_m-diag(\tanh^2{(q)}))\\
 &\times \left(\frac{1}{2u_m}R^{-1}\hat{G}^T\nabla{\varepsilon}(z)\right)   
\end{split}
\end{equation}
Multiplying $-u_m$ on both sides,
\begin{equation}
\begin{split}
&-u_m\tanh{(\tau_1+\varepsilon_{uu})}=-u_m\tanh{(\tau_1)}-\frac{1}{2}(I_m-diag(\tanh^2{(q)}))\\
&\times \left(R^{-1}\hat{G}^T(z)\nabla{\varepsilon}(z)\right)
\end{split}
\end{equation}
Hence proved.
\end{proof}

\begin{lemma}\label{tanhlem}
Following vector inequality holds true:\\
\begin{equation}
\|\tanh{(\tau_1(z))}-\tanh{(\tau_2(z))}\| \leq T_m \leq 2\sqrt{m}
\end{equation}
where $T_m=\sqrt{\sum_{i=1}^mmin(|\tau_{1i}-\tau_{2i}|^2,4)}$, $\tau_1(z)$ and $\tau_2(z)$ both belong in $\mathbb{R}^m$, therefore, $\tanh{(\tau_i(z))} \in \mathbb{R}^m,~i=1,2$. \\
\end{lemma}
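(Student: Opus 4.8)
The plan is to prove the vector inequality $\|\tanh(\tau_1(z))-\tanh(\tau_2(z))\| \le T_m \le 2\sqrt{m}$ by working componentwise and then assembling the Euclidean norm. Since $\tanh(\tau_1)-\tanh(\tau_2)$ is a vector in $\mathbb{R}^m$ with $i$-th component $\tanh(\tau_{1i})-\tanh(\tau_{2i})$, I would write $\|\tanh(\tau_1)-\tanh(\tau_2)\|^2 = \sum_{i=1}^m |\tanh(\tau_{1i})-\tanh(\tau_{2i})|^2$ and bound each scalar term by $\min(|\tau_{1i}-\tau_{2i}|^2,\,4)$, which would immediately give $\|\tanh(\tau_1)-\tanh(\tau_2)\| \le \sqrt{\sum_{i=1}^m \min(|\tau_{1i}-\tau_{2i}|^2,4)} = T_m$.

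The key is therefore to establish the \emph{scalar} bound $|\tanh(a)-\tanh(b)|^2 \le \min(|a-b|^2,\,4)$ for all real $a,b$. I would obtain this from two independent elementary facts. First, by the Mean Value Theorem there exists $\xi$ between $a$ and $b$ with $\tanh(a)-\tanh(b)=(1-\tanh^2(\xi))(a-b)$; since $0 \le 1-\tanh^2(\xi) \le 1$, this yields the Lipschitz bound $|\tanh(a)-\tanh(b)| \le |a-b|$, hence $|\tanh(a)-\tanh(b)|^2 \le |a-b|^2$. Second, because $\tanh$ maps $\mathbb{R}$ into the open interval $(-1,1)$, we have $|\tanh(a)-\tanh(b)| \le |\tanh(a)|+|\tanh(b)| < 2$, hence $|\tanh(a)-\tanh(b)|^2 \le 4$. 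Taking the minimum of the two bounds gives the desired componentwise inequality.

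Finally, for the outer bound $T_m \le 2\sqrt{m}$, I would note that each summand satisfies $\min(|\tau_{1i}-\tau_{2i}|^2,4) \le 4$, so $T_m = \sqrt{\sum_{i=1}^m \min(|\tau_{1i}-\tau_{2i}|^2,4)} \le \sqrt{\sum_{i=1}^m 4} = \sqrt{4m} = 2\sqrt{m}$, completing the chain of inequalities.

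This argument involves no genuine obstacle; the only point requiring a little care is ensuring the two scalar bounds are combined correctly under the square root, i.e.\ that taking $\min$ inside each summand is legitimate before summing (it is, since the inequality $|\tanh(\tau_{1i})-\tanh(\tau_{2i})|^2 \le \min(\cdot,\cdot)$ holds term-by-term and summation preserves the inequality). The derivative bound $1-\tanh^2(\xi)\le 1$ and the range bound $|\tanh(\cdot)|<1$ are the two structural properties of $\tanh$ doing all the work, so I would state them explicitly before invoking the Mean Value Theorem.
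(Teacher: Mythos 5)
Your proof is correct and follows essentially the same route as the paper: the $1$-Lipschitz property of $\tanh$ (which you justify explicitly via the Mean Value Theorem, where the paper merely asserts it) combined with the range bound $|\tanh(\cdot)|\leq 1$, applied componentwise with a $\min$ and assembled into the Euclidean norm. Your final step is in fact stated more carefully than the paper's, which contains a typo asserting $\sum_{i=1}^m \min(|\tau_{1i}-\tau_{2i}|^2,4)\leq 2\sqrt{m}$ where it should be the square root of that sum, exactly as you write it.
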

\begin{proof}
Since, $\tanh{(.)}$ is 1-Lipschitz, one can write,
\begin{equation}
\begin{split}
|\tanh{(\tau_{1i})}-\tanh{(\tau_{2i})}| \leq |\tau_{1i}-\tau_{2i}|
\end{split}
\end{equation}
Therefore using the above inequality and the fact that, $-1\leq \tanh{(.)} \leq 1$
\begin{equation}
\begin{split}
\|\tanh{(\tau_1(z))}-\tanh{(\tau_2(z))}\|^2 &= \sum_{i=1}^m|\tanh{\tau_{1i}}-\tanh{\tau_{2i}}|^2\\
& \leq \sum_{i=1}^mmin(|\tau_{1i}-\tau_{2i}|,2)^2\\
&\leq \sum_{i=1}^mmin(|\tau_{1i}-\tau_{2i}|^2,4)
\end{split}
\end{equation}
One can also see, using the absolute upper bound of $\tanh{(.)}$. 
\begin{equation}
\sum_{i=1}^mmin(|\tau_{1i}-\tau_{2i}|^2,4)\leq 2\sqrt{m}
\end{equation}
Which implies, 
\begin{equation}
\begin{split}
\|\tanh{(\tau_1(z))}-\tanh{(\tau_2(z))}\|\leq T_m \leq 2\sqrt{m} 
\end{split}
\end{equation}
\end{proof}







\bibliographystyle{iet}
\bibliography{sample}
\end{document}